\documentclass[10pt,a4paper]{article}
\usepackage{quantikz}
\usepackage{float}
\usepackage{subcaption}
\usepackage{algorithm}
\usepackage{algpseudocode}
\usepackage{tikz}
\usetikzlibrary{decorations.pathmorphing, arrows.meta, positioning, backgrounds, fit}
\newcommand{\hdfont}{}
\tikzset{
  hdsketch/.style={},
  hdblock/.style={hdsketch, draw=black, line width=0.7pt, fill=#1,
                  text width=2.35cm, minimum height=1.2cm, align=center,
                  inner sep=2pt, font=\hdfont\small},
  hdadd/.style={hdsketch, draw=black, line width=0.7pt, dash pattern=on 3pt off 2pt,
                fill=#1, text width=2.35cm, minimum height=1.2cm, align=center,
                inner sep=2pt, font=\hdfont\small},
  hdarr/.style={-{Latex[length=2.2mm,width=1.7mm]}, hdsketch, line width=0.7pt},
  hdio/.style={font=\hdfont\small, align=center, text width=2.0cm},
  hdwire/.style={hdsketch, line width=0.6pt},
  hdgate/.style={hdsketch, draw=black, line width=0.7pt, fill=#1,
                 align=center, inner sep=3pt, font=\hdfont\scriptsize, rounded corners=1pt},
  hdlane/.style={font=\hdfont\scriptsize, align=right, text width=1.5cm, anchor=east},
  hdreg/.style={font=\hdfont\footnotesize\itshape}
}

\newcommand{\qgatfrontend}[1]{%
  \node[hdio] (#1ain) at (-0.25,1.4) {jet\\ constituents};
  \node[hdblock={gray!14}, text width=2.0cm, minimum height=1.3cm] (#1enc) at (2.9,1.4)
       {Classical encoder\\ $\mv{x}_i\!\to\!(\theta_i,\eta_g)$};
  
  \draw[hdarr] (#1ain.east) -- (#1enc.west);
  \node[hdlane] at (3.85,0) {$q_0\!\dots\! q_{N-1}$ (data)};
  \node[hdlane, text width=1.9cm] at (3.85,-1.2) {$q_T$ (3-particle\\ product)};
  \draw[hdwire] (4.5,0) -- (13.9,0);
  \draw[hdwire] (4.5,-1.2) -- (13.9,-1.2);
  \node[hdgate={blue!14}, text width=2.6cm, minimum height=1.7cm] (#1att) at (6.4,0)
       {Grade-aware\\ attention\\ $\mathrm{IsingZZ}(\eta_g)$;\\ $R_YR_Z$};
  \draw[hdwire] (#1enc.east) -- (5.2,1.4);
  \fill (5.2,1.4) circle (1.5pt);
  \draw[hdarr] (5.2,1.4) -- node[pos=0.5,left=3pt,font=\hdfont\scriptsize]{$\theta_i$} ([xshift=-18pt]#1att.north);
  \draw[hdarr] (5.2,1.4) -- node[pos=0.5,right=3pt,yshift=7pt,font=\hdfont\scriptsize]{$\eta_g$} ([xshift=18pt]#1att.north);
  \node[hdgate={cyan!16}, minimum width=0.55cm, minimum height=0.55cm] (#1tH) at (8.1,-1.2) {$H$};
  \node[hdgate={cyan!12}, text width=1.5cm, minimum height=0.95cm] (#1tM) at (8.9,0) {ctrl-MultiRZ\\ $f_{ijk}$};
  \fill (8.9,-1.2) circle (1.7pt);
  \draw[hdwire] (8.9,-1.2) -- (8.9,-0.475);
  \node[hdgate={green!16}, text width=1.6cm, minimum height=1.9cm] (#1vqc) at (12.9,-0.6)
       {VQC\\ $R_YR_Z$ + CNOT ring\\ (data ${+}$ $q_T$)};
}
\usepackage[utf8]{inputenc} 
\usepackage[T1]{fontenc} 	
\usepackage[english]{babel} 

\usepackage[bitstream-charter]{mathdesign}

\usepackage{geometry} 		
\usepackage{amsmath} 		
\usepackage{mathtools} 		
\usepackage{float} 			
\usepackage{graphicx} 		
\usepackage{tabularx} 		
\usepackage{booktabs} 		
\usepackage{color, xcolor} 	
\usepackage{pdfpages} 		
\usepackage{extarrows} 		
\usepackage{multirow} 		
\usepackage{multicol} 		
\usepackage{enumitem} 		
\usepackage{xspace} 		
\usepackage{stackrel} 		
\usepackage{tikz} 			
\usetikzlibrary{positioning} 
\usepackage{braket} 		
\usepackage{bm} 			
\usepackage{tensor} 		
\usepackage{slashed} 		
\usepackage{siunitx} 		
\usepackage{lastpage} 		
\usepackage{cite} 			
\usepackage[normalem]{ulem} 
\usepackage{fontawesome} 	
\usepackage{tocloft} 		
\usepackage{titlesec} 		
\usepackage{doi} 			
\usepackage{hyperref} 		
\usepackage[most]{tcolorbox} 					
\usepackage[nameinlink, capitalize]{cleveref} 	
\usepackage[nottoc, notlot, notlof]{tocbibind} 	
\usepackage[ruled, vlined, algo2e]{algorithm2e} 
\usepackage{makecell}
\usepackage[makeroom]{cancel}
\usepackage{feynmf}
\usepackage{booktabs}
\usepackage{pdfpages}
\usepackage{pifont}
\makeatletter
\def\BState{\State\hskip-\ALG@thistlm}
\makeatother

\makeatletter
\@ifundefined{pdfoutput}{}{\DeclareGraphicsRule{*}{mps}{*}{}}
\makeatother

\makeatletter
\DeclareRobustCommand*{\bfseries}{%
   \not@math@alphabet\bfseries\mathbf
   \fontseries\bfdefault\selectfont
   \boldmath
}
\makeatother

\hypersetup{
	pdftitle={MadNIS},
	pdfauthor={Heimel et al.},
	colorlinks=true, 			
	linkcolor={red!50!black}, 	
	citecolor={blue!50!black}, 	
	urlcolor={blue!80!black} 	
}

\DeclareSymbolFont{usualmathcal}{OMS}{cmsy}{m}{n}
\DeclareSymbolFontAlphabet{\mathcal}{usualmathcal}

\SetArgSty{textnormal}
\SetKwComment{Comment}{{\small\#}~}{}
\SetCommentSty{mycommfont}

\setitemize{itemsep=0pt, parsep=0pt} 				
\setenumerate{itemsep=0pt, parsep=0pt} 				
\setitemize{itemsep=2pt,topsep=2pt,parsep=0pt,partopsep=0pt,leftmargin=*}
\setenumerate{itemsep=0pt,topsep=2pt,parsep=0pt,partopsep=0pt,labelindent=3pt,leftmargin=*}
\usepackage{amsmath}
 
\usepackage{amsthm} 
\usepackage{amsmath,amsthm}
\usepackage{graphicx}
\usepackage{booktabs}
\usepackage{multirow}
\usepackage{hyperref}
\usepackage{xcolor}
\usepackage{algorithm}
\usepackage{algpseudocode}
\usepackage{tikz}
\usepackage{pgfplots}
\pgfplotsset{compat=1.17} 
\usetikzlibrary{shapes,arrows,positioning,calc,decorations.pathreplacing,fit,backgrounds,shapes.geometric,shapes.symbols,shadows,arrows.meta}
\usepackage{subcaption}
\usepackage{rotating}
\usepackage{physics}
\usepackage{bbm}
\usepackage[utf8]{inputenc}
\usepackage[T1]{fontenc}
\usepackage{microtype}
\theoremstyle{definition}

\newtheorem{lemma}{Lemma}

\usepackage{quantikz}

\definecolor{red_cb}{HTML}{e41a1c}
\definecolor{blue_cb}{HTML}{377eb8}
\definecolor{green_cb}{HTML}{4daf4a}
\definecolor{purple_cb}{HTML}{984ea3}
\definecolor{orange_cb}{HTML}{ff7f00}

\definecolor{EmeraldGreen}{HTML}{1ea78d}
\definecolor{EnglishRed}{HTML}{b02427}
\hypersetup{colorlinks=true,urlcolor=EmeraldGreen,citecolor=EmeraldGreen,linkcolor=EnglishRed}

\newcommand\one{\leavevmode\hbox{\small1\normalsize\kern-.33em1}}

\newcommand{\arXiv}[2][]{%
	\ifthenelse{\equal{#1}{}}%
	{\href{http://arxiv.org/abs/#2}{arXiv:#2}}%
	{\href{http://arxiv.org/abs/#2}{arXiv:#2~[#1]}}}

\def\slashchar#1{\setbox0=\hbox{$#1$}           
   \dimen0=\wd0                                 
   \setbox1=\hbox{/} \dimen1=\wd1               
   \ifdim\dimen0>\dimen1                        
      \rlap{\hbox to \dimen0{\hfil/\hfil}}      
      #1                                        
   \else                                        
      \rlap{\hbox to \dimen1{\hfil$#1$\hfil}}   
      /                                         
   \fi}

\newcommand{\tikznode}[2]{%
\ifmmode%
\tikz[remember picture,baseline=(#1.base),inner sep=0pt] \node (#1) {$#2$};%
\else
\tikz[remember picture,baseline=(#1.base),inner sep=0pt] \node (#1) {#2};%
\fi}

\def\mathswitchr#1{\relax\ifmmode{\mathrm{#1}}\else$\mathrm{#1}$\xspace\fi}
\def\mathswitch#1{\relax\ifmmode#1\else$#1$\xspace\fi}

\hypersetup{pdftitle={QGrAT: Quantum Grade-Aware Transformer for Jet Tagging},
            pdfauthor={Pallavi Bhardwaj, Michael Spannowsky}}
\graphicspath{{./figs/}}
\usepackage{fancyhdr}
\fancypagestyle{titlefooter}{%
  \fancyhf{}%
  \fancyfoot[L]{%
    \scriptsize
    \begin{tabular}[t]{@{}l@{}}
      \hline\noalign{\vskip 2pt}
      \textsuperscript{*}\ \texttt{pallavi.bhardwaj@stud.uni-heidelberg.de}
    \end{tabular}}%
}
\newcommand{\Cl}{\mathcal{C}\ell}

\newcommand{\mv}[1]{\mathbf{#1}}
\newcommand{\grade}[2]{\langle#1\rangle_{#2}}
\newcommand{\RR}{\mathbb{R}}
\newcommand{\CC}{\mathbb{C}}
\newcommand{\qgat}{\text{QGrAT}}        
 
\newcommand{\lgatr}{\text{L-GATr}}

\begin{document}
\thispagestyle{titlefooter}
\begin{center}{\Large \textbf{
QGrAT: Quantum Grade-Aware Transformer for Jet Tagging
}}
\end{center}

\begin{center}
  Pallavi Bhardwaj\textsuperscript{1,2,*} and 
  Michael Spannowsky\textsuperscript{3,4}
\end{center}

\begin{center}
{\bf 1} Institut f\"ur Theoretische Physik, Universit\"at Heidelberg, Germany
\\
{\bf 2} SAP SE, Walldorf, Germany
\\
{\bf 3} Institute for Theoretical Physics (ITP), Karlsruhe Institute of Technology (KIT),
76131~Karlsruhe, Germany
\\
{\bf 4} Institute for Quantum Materials and Technologies (IQMT), Karlsruhe Institute of Technology (KIT),
76131~Karlsruhe, Germany
\end{center}
\begin{center}
\end{center}

\section*{Abstract}
Lorentz symmetry and the three-prong structure of hadronic top decays provide physical
guidance for designing compact quantum jet taggers. We introduce QGrAT (Quantum
Grade-Aware Transformer), a quantum transformer that incorporates these structures into
its interactions between constituent qubits. Geometric features constructed from the
constituent four-momenta are organised by the grades of the spacetime Clifford algebra
$\Cl(1,3)$, and their grade-specific inner products determine the pairwise attention
couplings. For fixed constituents and co-transforming reference vectors, these couplings
are Lorentz invariant, although the component-wise input encoding makes the complete
classifier frame dependent. An auxiliary qubit introduces an explicit three-particle
interaction through products of pairwise Minkowski inner products, while a trainable
in-circuit readout converts the resulting quantum state into a classification score.
We evaluate the architecture in exact state-vector simulation on the Top Tagging
benchmarks. With ten constituents per jet, the configuration
using grades $0$--$3$ and the three-particle ancilla requires twelve qubits and $144$
trainable parameters, reaching AUC $0.944{\pm}0.002$ and background rejection
$1/\varepsilon_B=35.7{\pm}2.4$ ($96.8{\pm}7.7$) at signal efficiency
$\varepsilon_S=0.50$ ($0.30$). This performance is competitive with the classical
\lgatr{}-Slim baseline at matched constituent and parameter budgets. Adding the
three-particle ancilla raises background rejection at $50\%$ signal efficiency from
$19.9$ to $35.7$, whereas extending the representation to the tested grade-$4$ feature
gives no consistent improvement. The study establishes a framework for incorporating
spacetime geometry into quantum attention and identifies explicit three-particle
correlations as the principal source of improved background rejection among the
architectural variations tested.

\section{Introduction}
\label{sec:intro}
At the Large Hadron Collider (LHC), searches for new particles rely on inferring the
underlying interactions from the particles observed in the
detectors~\cite{Spannowsky:2025tvb}. Jets are central to this task: they are narrow sprays
of particles formed as energetic quarks and gluons radiate and turn into bound states
called hadrons. Grouping these particles into a jet gives access to their combined
four-momentum, while the distribution of energy and the relative directions of particles
within the jet carry information about its origin. The study of these internal patterns
is known as jet substructure~\cite{Marzani:2019hun,Kogler:2018hem}.
Jet substructure is particularly useful for highly boosted heavy particles, such as
top quarks, whose momenta are much larger than their masses. Their decay products then
tend to travel in similar directions and can be reconstructed within a single jet.
A top quark that decays into a bottom quark and two light quarks,
$t\to bW\to bq\bar q'$, can leave three concentrated regions of energy, or prongs.
Top tagging aims to distinguish these jets from
the far more abundant jets initiated by light quarks and gluons, whose radiation can
produce similar patterns~\cite{Marzani:2019hun}. Reliable tagging helps isolate processes
of interest and improves sensitivity to new heavy particles that decay into top
quarks~\cite{Spannowsky:2025tvb}.
These differences reside in correlations among jet constituents, motivating machine
learning models that process their four-momenta $p_i^\mu=(E_i,\vec p_i)$ directly,
where $E_i$ and $\vec p_i$ are the energy and three-momentum of constituent $i$.
These components change under Lorentz transformations, which describe rotations and
changes between inertial reference frames. Lorentz-invariant combinations, such as
pairwise invariant masses, retain the same value in every frame, whereas equivariant
representations transform in a prescribed way alongside the input momenta.
These properties provide a physical guide for designing how constituents interact
inside a learning model.

Classical Lorentz-equivariant models for jet tagging have evolved rapidly.
LorentzNet~\cite{lorentznet} constructs a message-passing graph network with Minkowski
inner products as invariant features, while PELICAN~\cite{pelican} uses permutation-equivariant
aggregation with Lorentz invariants.
The Particle Transformer~\cite{particle_transformer} applies multi-head self-attention with
pairwise interaction terms.
GATr~\cite{gatr} introduced geometric algebra transformers acting on multivectors in
$\Cl(3,0,1)$, and L-GATr~\cite{lgatr} extended this to the Minkowski algebra $\Cl(1,3)$,
embedding all operations as equivariant maps on Lorentz multivectors and reporting
state-of-the-art performance on the Top Tagging benchmark.

In parallel, quantum and quantum-inspired approaches have been developed to analyse collider-event data, which 
encompass supervised classification~\cite{blance_spannowsky_vqc,wu_quantum_hep,maier_photonic_qelm,felser_quantum_inspired_hep} and anomaly
detection~\cite{ngairangbam_quantum_autoencoder,bermot_quantum_gan_hep,blance_photonic_graphs,incudini_quantum_kernels,belis_latent_quantum_anomaly}.
Related developments extend to neural-network training on quantum
annealers~\cite{abel_completely_quantum_nn}. Due to the limited number of qubits available on
 current quantum hardware, a central design choice is how to represent collision data in a compact form. Strategies include classical dimensionality reduction, direct particle-level
encodings, and representations based on jet images or Lund-plane coordinates~\cite{belis_guided_quantum_compression,bal_one_particle_one_qubit,elhag_qcnn_jets,lp2b_encoding}.
The circuit must then balance expressive power and trainability against its number of gates and
parameters~\cite{ngairangbam_geometric_qnn,ngairangbam_generic_reachability}; dimensional expressivity analysis can help identify redundant parameters
and simplify the ansatz~\cite{funcke_dimensional_expressivity}.
Another limitation for current quantum computers is the coherence time, which affects the circuit depth, and suggests an efficient and lean model design. Graph structure, physical symmetries and attention mechanisms provide useful inductive
biases for quantum jet classifiers~\cite{forestano_invariant_equivariant_qgnn,suda_neto_lie_eqgnn,li_enforcing_symmetries,qcgnn_jet,comajoan_quantum_vision_jets}.
Geometric quantum machine learning supplies methods for building group invariance and
equivariance into quantum models~\cite{meyer_group_invariant,larocca_gqml,nguyen_equivariant}.
Such constraints can also improve trainability, including the avoidance of barren
plateaus in suitable $S_n$-equivariant circuit families~\cite{schatzki_sn}.
Quantum attention and Clifford-algebra representations offer complementary starting
points for combining relational learning with geometric
structure~\cite{quantum_attention_transformer,trindade_clifford_qnn}.

These classical and quantum developments motivate using the physical structure already
present in jet data to guide the design of a quantum tagger. The underlying interactions
respect Lorentz symmetry, and the constituent four-momenta transform in a known way under
changes of inertial frame. For a fixed set of constituents, these transformations preserve
invariant masses and Minkowski inner products, which encode information about the jet's
internal structure. Building such relations into a tagger supplies an inductive bias:
known physical structure is provided by the architecture, reducing what must be inferred
from training examples. This can support smaller models with fewer trainable parameters
and improve data efficiency and generalisation, as demonstrated by classical
Lorentz-aware taggers~\cite{lorentznet,Bogatskiy:2020pex}.

These potential gains are particularly valuable for quantum networks, where available
qubits and usable circuit depth are limited. Fixing part of the computation through
physical symmetry can reduce the number of independent parameters to optimise and focus
the circuit on correlations relevant to the task. The resulting expectation is that a
physically structured quantum tagger can achieve useful discrimination with a more
compact model and more efficient training. This motivates geometric quantum machine
learning, which constrains quantum architectures using the symmetries of the learning
problem~\cite{larocca_gqml,nguyen_equivariant,meyer_group_invariant,schatzki_sn}.
Clifford algebra provides a concrete link between the classical and quantum constructions:
it organises four-momenta and their correlations into multivectors for classical
equivariant networks~\cite{lgatr,Brehmer:2024LHC}, while its matrix representations
connect geometric information to qubit operators. Lorentz-scalar inner products can
then determine the angles of quantum gates.

We introduce QGrAT to investigate whether Lorentz-informed interactions can support competitive 
top tagging with a compact quantum model. The circuit couples pairs of jet constituents through gate angles 
calculated from Clifford inner products, evaluated separately for each multivector grade. 
These input-dependent weights act as attention scores by controlling the interactions between constituent 
qubits, without requiring a learned query–key map. Their symmetry properties depend on the constituent and 
reference vectors used to construct them. For a fixed constituent set, the grade-0, grade-1 and grade-2 
couplings are Lorentz invariant. The grade-3 feature is a momentum-weighted sum of trivectors. The grade-4 feature
uses four-volume magnitudes with a leading-constituent reference, and its invariance additionally
requires preserving that reference selection. 
The component-wise encoding retains a dependence on the reference frame, so the complete classifier is not 
Lorentz equivariant. We evaluate the model in exact simulation.

The paper is organised as follows. Section~\ref{sec:background} introduces spacetime
Clifford algebra, its multivector grades and its relation to qubit operators.
Section~\ref{sec:equivariance} establishes the symmetry properties of the attention
couplings and explains the role of reference vectors and constituent selection.
Section~\ref{sec:method} develops the circuit, from multivector encoding and grade-wise
attention to the ancilla couplings, variational layer and readout.
Section~\ref{sec:experimental_setup} describes the Top Tagging dataset, training procedure,
classical baselines and evaluation protocol. Section~\ref{sec:results} presents the tagging
performance and training diagnostics, while Section~\ref{sec:systematic_experiments}
examines the effects of grade content, ancilla couplings and constituent count through
controlled comparisons. Section~\ref{sec:conclusion} discusses the findings, limitations and
prospects for further development.

Our methodological contribution is, to our knowledge, the first quantum jet tagger
whose attention is organised by the grades of the spacetime Clifford algebra $\Cl(1,3)$,
with pairwise entangling-gate angles set by the corresponding Clifford inner products.
We combine this mechanism with a Hadamard-test ancilla carrying a Lorentz-invariant
three-particle correlation and an in-circuit quantum readout, and establish the symmetry
properties of the resulting attention couplings. In exact simulation, this construction
achieves top-tagging performance competitive with the classical \lgatr{}-Slim baseline
at matched constituent and parameter budgets. The study identifies the
three-particle-product ancilla as the main source of improved background rejection
among the architectural variations tested, demonstrating how a specific physical
correlation can strengthen a compact quantum classifier.

\section{Spacetime Algebra and Lorentz Symmetry}
\label{sec:background}

To build \qgat{}, we first need to describe jet kinematics in a form that makes its
Lorentz structure explicit. Our inputs are particle four-momenta, while the attention
mechanism also uses geometric quantities constructed from combinations of these momenta.
Clifford algebra lets us describe these objects in a common language and organise them
into multivector grades. In this section, we introduce the grades and their geometric
meaning, then explain how they transform under changes of inertial frame. This provides
the basis for identifying the invariant attention couplings and the conditions under
which their invariance holds in Section~\ref{sec:equivariance}. Finally, we recall how
Clifford algebra can be represented using Pauli operators, providing the algebraic
motivation for the quantum circuit constructed in Section~\ref{sec:method}.

\subsection{The Clifford Algebra of Minkowski Spacetime}
\label{sec:clifford}

The Clifford algebra $\Cl(1,3)$ over the Minkowski space $\RR^{1,3}$ is the associative
algebra generated by four basis vectors $\{e_0,e_1,e_2,e_3\}$ satisfying the
anticommutation relations~\cite{doran_lasenby}
\begin{equation}
  e_\mu e_\nu + e_\nu e_\mu = 2\eta_{\mu\nu}\,\mathbb{1},
  \qquad
  \eta = \mathrm{diag}(+1,-1,-1,-1).
  \label{eq:clifford_relation}
\end{equation}
The defining relation~\eqref{eq:clifford_relation} encodes the Minkowski metric directly
into the algebra, where $e_0^2=+1$ corresponds to the temporal direction and $e_i^2=-1$ to the
spatial directions.
The algebra has $2^4=16$ basis blades organised into five grade subspaces, with dimensions
$1+4+6+4+1$ corresponding to scalars, four-vectors, bivectors, trivectors, and the
pseudoscalar respectively~\cite{doran_lasenby}.
A general multivector $\mv{x}\in\Cl(1,3)$ is an arbitrary element of this 16-dimensional
space, and its grade-$k$ projection is written $\grade{\mv{x}}{k}$.
A particle four-momentum $p^\mu=(E,p_x,p_y,p_z)$ is embedded as the grade-1 element
$\mv{p}=p^\mu e_\mu$.

The geometric meaning of the grades is illustrated in Figure~\ref{fig:clifford_grades}.
An exterior product $a_1\wedge\cdots\wedge a_g$ forms a simple grade-$g$ blade spanning
the oriented $g$-dimensional subspace defined by its vector factors. It vanishes if the
factors are linearly dependent, and exchanging two factors reverses its sign.
Thus, adding an independent vector factor extends an oriented area to a three-volume
and then to a four-volume in spacetime. The figure connects these geometric objects to
the features used in QGrAT, where $P$ is the summed constituent four-momentum,
$T_i$ is the weighted trivector sum defined in Section~\ref{sec:ising_invariance},
$h_i$ is the weighted four-volume magnitude, and $p_{\text{lead}}$ denotes the
four-momentum of the leading-$p_T$ constituent.

\begin{figure}[H]
\centering
\begin{tikzpicture}[
  x=1cm, y=1cm,
  line join=round, line cap=round,
  grade edge/.style={line width=0.7pt},
  grade arrow/.style={grade edge,-{Stealth[length=1.7mm,width=1.2mm]}},
  grade label/.style={font=\footnotesize,inner sep=1pt,align=center}
]
  \foreach \x/\g/\name/\colour/\ncomp in {
    0/0/scalar/gray/1,
    3.2/1/vector/blue/4,
    6.4/2/bivector/purple/6,
    9.6/3/trivector/orange/4,
    12.8/4/pseudoscalar/red/1}{
    \begin{scope}[shift={(\x,0)}]
      \draw[draw=\colour!45,fill=\colour!3,line width=0.4pt,rounded corners=2pt]
        (-1.51,0) rectangle (1.51,3.35);
      \node[grade label,text=\colour!65!black] at (0,3.09) {grade $\g$};
      \node[grade label,font=\small\bfseries] at (0,2.76) {\name};
      \draw[\colour!30,line width=0.4pt] (-1.28,2.5) -- (1.28,2.5);
      \node[grade label,font=\scriptsize] at (0,0.23)
        {\ncomp\ basis component\ifnum\ncomp=1\else s\fi};
    \end{scope}
  }

  \fill[gray!75] (-0.3,1.73) circle (3pt);
  \node[grade label,text=gray!65!black] at (0.25,1.73) {$s\in\RR$};
  \node[grade label] at (0,0.67) {$p_i\cdot P$};

  \begin{scope}[shift={(3.2,0)}]
    \draw[grade arrow,blue!75!black] (-0.85,1.3) -- (0.85,2.13);
    \node[grade label] at (0,0.67) {$p_i$};
  \end{scope}

  \begin{scope}[shift={(6.4,0)}]
    \path[draw=purple!80!black,fill=purple!16,grade edge]
      (-1.02,1.22) -- (0.48,1.22) -- (1.02,2.12) -- (-0.48,2.12) -- cycle;
    \draw[grade arrow,purple!80!black]
      (0.28,1.57) arc[start angle=-35,end angle=255,x radius=0.36,y radius=0.24];
    \node[grade label] at (0,0.67) {$p_i\wedge P$};
  \end{scope}

  \begin{scope}[shift={(9.6,0)},draw=orange!75!black]
    \draw[fill=orange!22,grade edge]
      (-0.82,1.1) -- (0.37,1.1) -- (0.37,2.0) -- (-0.82,2.0) -- cycle;
    \draw[fill=orange!10,grade edge]
      (-0.82,2.0) -- (-0.37,2.36) -- (0.82,2.36) -- (0.37,2.0) -- cycle;
    \draw[fill=orange!32,grade edge]
      (0.37,1.1) -- (0.82,1.46) -- (0.82,2.36) -- (0.37,2.0) -- cycle;
    \node[grade label,text=orange!75!black] at (-0.2,1.58) {$+$};
    \node[grade label] at (0,0.67) {$T_i$};
  \end{scope}

  \begin{scope}[shift={(12.8,1.73)},draw=red!70!black]
    \coordinate (A) at (-0.9,-0.63);
    \coordinate (B) at (0.45,-0.63);
    \coordinate (C) at (0.45,0.34);
    \coordinate (D) at (-0.9,0.34);
    \coordinate (E) at (-0.45,-0.3);
    \coordinate (F) at (0.9,-0.3);
    \coordinate (G) at (0.9,0.67);
    \coordinate (H) at (-0.45,0.67);
    \path[fill=red!5] (A) -- (B) -- (F) -- (G) -- (H) -- (D) -- cycle;
    \draw[grade edge,red!45] (A) -- (B) -- (C) -- (D) -- cycle
      (E) -- (F) -- (G) -- (H) -- cycle
      (A) -- (E) (B) -- (F) (C) -- (G) (D) -- (H);
    \foreach \v in {A,B,C,D,E,F,G,H}{
      \path (\v) coordinate (outer\v);
      \path (0,0) -- (\v) coordinate[pos=0.47] (inner\v);
      \draw[red!35,line width=0.4pt] (outer\v) -- (inner\v);
    }
    \draw[grade edge] (innerA) -- (innerB) -- (innerC) -- (innerD) -- cycle
      (innerE) -- (innerF) -- (innerG) -- (innerH) -- cycle
      (innerA) -- (innerE) (innerB) -- (innerF)
      (innerC) -- (innerG) (innerD) -- (innerH);
    \node[grade label,text=red!70!black,font=\scriptsize] at (0,0) {$+$};
    \node[grade label] at (0,-1.06) {$h_i$};
  \end{scope}
\end{tikzpicture}
\caption{Geometric interpretation of the five grades in $\Cl(1,3)$ and the corresponding
QGrAT features. The drawings schematically represent simple spacetime blades, with grade~4
shown as a projection of an oriented four-volume. The $+$ signs select an orientation
for the volumes. The component counts are the grade-subspace dimensions $\binom{4}{g}$,
giving $1+4+6+4+1=16$ basis blades. The implemented $T_i$ sums weighted
trivectors, while $h_i$ sums four-volume magnitudes and retains no orientation sign.
All contractions use the Minkowski metric
$\mathrm{diag}(+1,-1,-1,-1)$.}
\label{fig:clifford_grades}
\end{figure}

\subsection{Lorentz Equivariance via the Spin Group}
\label{sec:spin_group}

The proper orthochronous Lorentz group $\mathrm{SO}^+(1,3)$ has as its double cover the
identity component $\mathrm{Spin}^+(1,3)\cong\mathrm{SL}(2,\CC)$, every element of which
satisfies $s\,s^\dagger=1$. It acts on multivectors through the twisted
adjoint representation~\cite{doran_lasenby,gatr},
\begin{equation}
  \rho(s)\colon \mv{x} \;\mapsto\; s\,\mv{x}\,s^\dagger,
  \qquad s\in\mathrm{Spin}^+(1,3)\subset\Cl(1,3).
  \label{eq:spin_action}
\end{equation}
This action preserves each grade subspace, so grade projection commutes with it,
$\grade{s\mv{x}s^\dagger}{k} = s\,\grade{\mv{x}}{k}\,s^\dagger$ for all $k$~\cite[Ch.~4]{doran_lasenby}.
L-GATr~\cite{lgatr,gatr} constructs all of its layers as $\mathrm{Spin}^+(1,3)$-equivariant
maps under this action, guaranteeing that the entire forward pass commutes with Lorentz
transformations.

We do not implement this group action on the qubits. The Lorentz group is non-compact and
has no nontrivial finite-dimensional unitary representation, so a unitary circuit cannot
carry it. Instead we use the quantities that are invariant under this action, the Clifford
inner products of Section~\ref{sec:equivariance}, as the circuit's couplings.

\subsection{Pauli Representation of Clifford Algebra}
\label{sec:correspondence}

The Euclidean Clifford algebra $\Cl(3,0)$, generated by $\{e_1,e_2,e_3\}$ with
$e_ie_j+e_je_i=2\delta_{ij}\mathbb{1}$, is isomorphic to $\mathrm{End}(\CC^2)$ via the
identification $e_i\leftrightarrow\sigma_i$ with the Pauli matrices~\cite{doran_lasenby,trindade_clifford_qnn}.
For the Minkowski algebra $\Cl(1,3)$, the Dirac representation gives a $4\times4$ matrix
representation as tensor products of Pauli operators. This correspondence provides the
algebraic motivation for expressing our circuit in terms of Pauli rotations. In our
construction, multivector components determine single-qubit rotation angles, while Clifford
inner products determine the angles of two-qubit gates in the grade-aware attention.
A grade subspace can contain several blades (six for the bivector grade and four for the
trivector grade), so the encoder of Section~\ref{sec:encoding_layer} assigns a distinct rotation
to each blade component.

To understand how a pairwise relationship enters the attention, consider two particles and
one multivector grade. Their Clifford inner product gives a single real number $\eta_g$,
which we use directly as the angle of an IsingZZ gate,
$\mathrm{IsingZZ}(\eta_g)=\exp(-\tfrac{i}{2}\,\eta_g\,Z_iZ_j)$.
This gate changes the relative phase between the components of the quantum state with
matching qubit values, $\ket{00}$ and $\ket{11}$, and those with different values,
$\ket{01}$ and $\ket{10}$. It therefore records the particle relationship without immediately
changing the probabilities of these four outcomes. The phase determines how the components
interfere when subsequently mixed, and rotations after the attention gates allow this information
to affect the measured classification output. Encoding data through such ZZ phases is an
established quantum feature-map construction~\cite{havlicek_feature_map}. It gives us a simple
way to introduce each pairwise scalar into the circuit, with no independent trainable
parameter in the attention coupling.

The Lorentz invariance of this coupling comes from the scalar $\eta_g$, rather than the
choice of the ZZ operator. An IsingXX or IsingYY gate driven by the same invariant angle
would retain that property. Our choice of ZZ is thus motivated by its simple phase-encoding
interpretation. It is neither required by Clifford algebra nor established here to be superior
to these alternatives. This choice applies specifically to the grade-aware attention. The
quantum readout uses both IsingZZ and IsingYY gates with trainable angles to collect information
from the data and ancilla qubits (Section~\ref{sec:qgatqr_readout}). The Clifford--Pauli
correspondence motivates this construction but does not by itself make the component encoding
Lorentz covariant.

\section{Lorentz Symmetry of the Attention Couplings}
\label{sec:equivariance}

Here, we briefly formalise a key structural property of grade-aware attention. Clifford inner
products are invariant when their multivector arguments transform together under the Lorentz
group. Figure~\ref{fig:clifford_grades} illustrates the grade-wise features whose pairings
provide the attention couplings. We first establish this invariance and then specify its scope for the
implemented features, which are constructed from constituent momenta at every grade.
Constituent selection and the leading-particle reference restrict which transformations
of a jet leave the resulting couplings unchanged
(Section~\ref{sec:ising_invariance}).

\subsection{Clifford Inner Products Are Lorentz Scalars}
\label{sec:ip_scalar}

The Lorentz invariance of the scalar (grade-0) part of a geometric product is a foundational
property of spacetime geometric algebra, established by~\cite{hestenes_sta} and developed
in the modern treatments of~\cite{doran_lasenby} and~\cite{lounesto_clifford}.
The same invariance underlies the equivariant attention scores used in
GATr~\cite{gatr} and L-GATr~\cite{lgatr}.
We recall the explicit statement below for the grade-$g$ inner products that will serve as
$\mathrm{IsingZZ}$ entanglement angles in our circuit, both for self-containedness and to
fix the form used in the subsequent QGrAT-specific result.

\begin{lemma}[Lorentz invariance of Clifford inner products~\cite{hestenes_sta,doran_lasenby,lounesto_clifford,gatr,lgatr}]
\label{lem:lorentz_invariance}
Let $\mv{x}_i,\mv{x}_j\in\Cl(1,3)$ be two multivectors and $s\in\mathrm{Spin}^+(1,3)$.
Define the grade-$g$ Clifford inner product
\begin{equation}
  \eta_g^{(ij)}
  = \bigl\langle \grade{\mv{x}_i}{g}^{\dagger}\;\grade{\mv{x}_j}{g} \bigr\rangle_0
  \;\in\;\RR,
  \label{eq:clifford_ip}
\end{equation}
where $(\cdot)^\dagger$ is the Clifford reverse and $\langle\cdot\rangle_0$ keeps only the
scalar (grade-0) part of a multivector.
Then $\eta_g^{(ij)}$ is invariant under a simultaneous Lorentz transformation of both
multivectors,
\begin{equation}
  \eta_g\bigl(\rho(s)\mv{x}_i,\;\rho(s)\mv{x}_j\bigr)
  = \eta_g^{(ij)}.
  \label{eq:invariance}
\end{equation}
Here $\rho(s)\mv{x}_i$ denotes the multivector $\mv{x}_i$ after the Lorentz transformation
represented by $s$ acts on it through the sandwich map~\eqref{eq:spin_action}. Equation~\eqref{eq:invariance}
therefore states that the inner product computed in the transformed frame equals the one
computed in the original frame.
\end{lemma}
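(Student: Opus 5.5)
The approach is to push the sandwich action through the grade projection and the reverse, and then use cyclicity of the scalar part.

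First, since $\rho(s)$ preserves each grade subspace (Section~\ref{sec:spin_group}), we have $\grade{\rho(s)\mv{x}_i}{g} = s\,\grade{\mv{x}_i}{g}\,s^\dagger$. The same holds for $\mv{x}_j$. The transformed inner product is therefore
\begin{equation*}
\bigl\langle (s A s^\dagger)^\dagger \, s B s^\dagger \bigr\rangle_0,
\qquad A=\grade{\mv{x}_i}{g},\; B=\grade{\mv{x}_j}{g}.
\end{equation*}

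Second, we use that the reverse is an anti-automorphism, $(XY)^\dagger = Y^\dagger X^\dagger$, and an involution. This gives $(sAs^\dagger)^\dagger = s A^\dagger s^\dagger$, so the argument becomes $s A^\dagger s^\dagger s B s^\dagger$.

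Third, we invoke the normalisation $s^\dagger s = 1$ for $s\in\mathrm{Spin}^+(1,3)$. The section states $s s^\dagger=1$, and for a rotor the left inverse equals the right inverse, since in a finite-dimensional associative unital algebra a one-sided inverse is two-sided. The expression then reduces to $\langle s A^\dagger B s^\dagger\rangle_0$.

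Finally, we use cyclicity of the scalar part, $\langle XY\rangle_0=\langle YX\rangle_0$. This can be proved by bilinearity on basis blades: for blades $e_I, e_J$ the product $e_Ie_J$ has a nonzero scalar part only when $I=J$, and then $e_Ie_J = e_Je_I$. With cyclicity, $\langle s (A^\dagger B) s^\dagger\rangle_0 = \langle A^\dagger B\, s^\dagger s\rangle_0 = \langle A^\dagger B\rangle_0 = \eta_g^{(ij)}$, which is \eqref{eq:invariance}.

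The main point needing care is the cyclicity lemma together with the sign conventions of the reverse. Reversal is linear, so it does not affect the blade argument, and the argument is insensitive to the metric signature, which enters only through a common factor $e_I^2=\pm1$. I would state the cyclicity lemma separately with the blade computation. An alternative is to note that $\langle\cdot\rangle_0$ is proportional to the normalised trace in the Dirac representation of Section~\ref{sec:correspondence}, since every non-scalar blade is traceless there. Cyclicity of the trace then gives the result immediately.
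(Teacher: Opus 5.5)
Your proof is correct and follows the paper's argument step for step: grade preservation under the sandwich map, the anti-automorphism property of the reverse, cancellation of $s^\dagger s$, and invariance of the scalar part. The only difference is that you justify the last two steps explicitly (left inverse equals right inverse, and cyclicity $\langle XY\rangle_0=\langle YX\rangle_0$ via basis blades or the Dirac trace), whereas the paper uses $s^\dagger s=1$ silently and cites the scalar-part invariance. That last step also follows directly from the grade commutation you already used in your first step, since $\langle sMs^\dagger\rangle_0 = s\langle M\rangle_0 s^\dagger = \langle M\rangle_0\, s s^\dagger = \langle M\rangle_0$.
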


\begin{proof}
The derivation is standard~\cite[Ch.~5]{doran_lasenby} and reproduced here for
self-containedness.
Under the twisted adjoint action~\eqref{eq:spin_action}, each grade subspace is
preserved~\cite{doran_lasenby,gatr}, so
$\grade{\rho(s)\mv{x}}{g} = s\,\grade{\mv{x}}{g}\,s^\dagger$.
Because the reverse flips the order of a product, $(XY)^\dagger = Y^\dagger X^\dagger$, and
applying it twice returns the original, $(A^\dagger)^\dagger = A$, we have
$(s\,A\,s^\dagger)^\dagger = s\,A^\dagger\,s^\dagger$ for any multivector $A$~\cite{lounesto_clifford}.
We obtain
\begin{align}
  \eta_g\bigl(\rho(s)\mv{x}_i,\,\rho(s)\mv{x}_j\bigr)
  &= \bigl\langle
       (s\,\grade{\mv{x}_i}{g}\,s^\dagger)^\dagger
       \;(s\,\grade{\mv{x}_j}{g}\,s^\dagger)
     \bigr\rangle_0 \notag\\
  &= \bigl\langle
       s\,\grade{\mv{x}_i}{g}^\dagger\,s^\dagger\,
       s\,\grade{\mv{x}_j}{g}\,s^\dagger
     \bigr\rangle_0
   = \bigl\langle
       s\,\grade{\mv{x}_i}{g}^\dagger\,\grade{\mv{x}_j}{g}\,s^\dagger
     \bigr\rangle_0.
\end{align}
The grade-0 projection of $s\,M\,s^\dagger$ equals the grade-0 projection of $M$ for any
$s\in\mathrm{Spin}^+(1,3)$ and any multivector $M$, since the scalar part is invariant under
the spin-group action~\cite{doran_lasenby,gatr}.\\
Therefore
$\eta_g(\rho(s)\mv{x}_i,\rho(s)\mv{x}_j) = \langle\grade{\mv{x}_i}{g}^\dagger\,\grade{\mv{x}_j}{g}\rangle_0 = \eta_g^{(ij)}$.
\end{proof}

\subsection{Grade-1 Specialisation to the Minkowski Metric}
\label{sec:minkowski}

The identification of the scalar part of the geometric product of two four-vectors with
the Minkowski inner product is a standard result in geometric algebra, established in the
foundational treatment of spacetime algebra by~\cite{hestenes_sta} and presented in
modern form by~\cite{doran_lasenby} and~\cite{lounesto_clifford}.
It also underpins the equivariant attention scores used in the classical L-GATr
architecture~\cite{lgatr,gatr}, where pairwise Minkowski products of grade-1 components
serve as Lorentz-invariant scores.
We recall the identity below in the form used by QGrAT, both for self-containedness and to
make explicit how the grade-1 IsingZZ angle in our circuit reduces to this canonical
Lorentz invariant.

\begin{lemma}[Grade-1 inner product as Minkowski metric~\cite{hestenes_sta,doran_lasenby,lounesto_clifford}]
\label{lem:minkowski}
Let $\mv{p}_i=p_i^\mu e_\mu$ and $\mv{p}_j=p_j^\mu e_\mu$ be grade-1 multivectors encoding
four-momenta.
Then
\begin{equation}
  \eta_1^{(ij)}
  = \bigl\langle\mv{p}_i^\dagger\,\mv{p}_j\bigr\rangle_0
  = E_iE_j - p_x^{(i)}p_x^{(j)} - p_y^{(i)}p_y^{(j)} - p_z^{(i)}p_z^{(j)}
  = p_i^\mu\eta_{\mu\nu}p_j^\nu.
  \label{eq:minkowski_ip}
\end{equation}
\end{lemma}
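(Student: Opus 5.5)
The plan is a direct computation from the defining relation~\eqref{eq:clifford_relation}, using only bilinearity of the geometric product and of the grade-0 projection.

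\textbf{Step 1: the reverse of a vector.} Since the Clifford reverse fixes grade-0 and grade-1 elements, we have $\mv{p}_i^\dagger=\mv{p}_i$. Hence $\eta_1^{(ij)}=\langle \mv{p}_i\mv{p}_j\rangle_0$.

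\textbf{Step 2: expand the product.} By bilinearity,
\[
\mv{p}_i\mv{p}_j=\sum_{\mu,\nu}p_i^\mu p_j^\nu\, e_\mu e_\nu .
\]
I then split each $e_\mu e_\nu$ into its symmetric and antisymmetric parts:
\[
e_\mu e_\nu=\tfrac12(e_\mu e_\nu+e_\nu e_\mu)+\tfrac12(e_\mu e_\nu-e_\nu e_\mu).
\]
By~\eqref{eq:clifford_relation}, the symmetric part equals $\eta_{\mu\nu}\mathbb{1}$, which is pure scalar.

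\textbf{Step 3: the antisymmetric part has no scalar component.} The antisymmetric part vanishes for $\mu=\nu$. For $\mu\neq\nu$ it equals $e_\mu e_\nu$, which is a basis bivector blade, i.e.\ grade 2. Its scalar projection is therefore zero. Applying $\langle\cdot\rangle_0$ then gives
\[
\langle\mv{p}_i\mv{p}_j\rangle_0=\sum_{\mu,\nu}p_i^\mu\eta_{\mu\nu}p_j^\nu .
\]

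\textbf{Step 4: insert the metric.} Substituting $\eta=\mathrm{diag}(+1,-1,-1,-1)$ gives the explicit form $E_iE_j-\vec p_i\cdot\vec p_j$, with the components written out as in~\eqref{eq:minkowski_ip}.

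The only point needing care is Step 3. One must justify that $e_\mu e_\nu$ for $\mu\ne\nu$ has vanishing grade-0 part. This follows from the convention that the 16 blades $e_{\mu_1}\cdots e_{\mu_k}$ with $\mu_1<\dots<\mu_k$ form the graded basis of Section~\ref{sec:clifford}, so that $e_\mu e_\nu$ for $\mu\neq\nu$ is, up to sign, a grade-2 basis element.
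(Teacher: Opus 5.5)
Your proposal is correct and follows the paper's proof essentially step for step: reverse-invariance of vectors, then the symmetric/antisymmetric split of $e_\mu e_\nu$ with the symmetric part fixed by the Clifford relation, then discarding the grade-2 antisymmetric part under $\langle\cdot\rangle_0$. Your remark that $e_\mu e_\nu$ for $\mu\neq\nu$ has no scalar part, justified through the blade basis, makes explicit what the paper asserts when it labels that term as grade 2.
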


\begin{proof}
The derivation follows the standard one in~\cite[Ch.~5]{doran_lasenby} and is reproduced
here for the explicit form of the grade-1 case used in this paper.
Grade-1 multivectors are reverse-invariant ($\mv{p}^\dagger=\mv{p}$).
Expanding the geometric product symmetrically and antisymmetrically using the defining
relation~\eqref{eq:clifford_relation} gives
\begin{equation}
  \mv{p}_i\mv{p}_j
  = p_i^\mu p_j^\nu e_\mu e_\nu
  = \underbrace{p_i^\mu p_j^\nu \eta_{\mu\nu}\,\mathbb{1}}_{\text{grade 0}}
   + \underbrace{p_i^\mu p_j^\nu\,\tfrac{1}{2}[e_\mu,e_\nu]}_{\text{grade 2}}.
\end{equation}
Projecting onto grade 0 retains only the first term, giving
$\langle\mv{p}_i\mv{p}_j\rangle_0 = p_i^\mu\eta_{\mu\nu}p_j^\nu$.
\end{proof}

Lemma~\ref{lem:minkowski} establishes that the grade-1 Clifford inner product is exactly
the Minkowski dot product, which is the fundamental Lorentz invariant of special
relativity~\cite{doran_lasenby,lorentznet}.
The QGrAT circuit therefore uses the Minkowski inner product between particle pairs as its
grade-1 IsingZZ entanglement angle, and analogous Clifford pairings for the higher grades.
The invariance of the latter under transformations of the jet depends on the reference-vector
prescription specified below.
This generalises the explicit-invariant approach of LorentzNet~\cite{lorentznet}
to the full Clifford grade decomposition.

\subsection{Invariance of Entangling-Gate Coefficients and Its Scope}
\label{sec:ising_invariance}

\begin{lemma}[Invariance of the entangling-gate coefficients under simultaneous Lorentz transformations]
\label{prop:circuit_equivariance}
Let $U_{\text{ent}}(\{\eta_g^{(ij)}\},\theta)$ denote the entangling block of the grade-aware
attention, built from the pairwise $\mathrm{IsingZZ}$ gates whose angles are the Clifford inner
products $\eta_g^{(ij)}$, together with the data-independent variational parameters $\theta$.
This block excludes the per-token encoding rotations, which are treated separately.
Let $\Lambda\in\mathrm{SO}^+(1,3)$ be a Lorentz transformation with spin-group representative
$s\in\mathrm{Spin}^+(1,3)$.
Fix the constituent labels and gate order, and suppose every multivector argument transforms
as $\mv{x}_i\mapsto\rho(s)\mv{x}_i$. For features constructed from several four-vectors,
this requires all constituent and reference vectors entering the feature to co-transform.
Then the entangling-gate angles are unchanged,
\begin{equation}
  U_{\text{ent}}\bigl(\{\eta_g(\rho(s)\mv{x}_i,\rho(s)\mv{x}_j)\},\,\theta\bigr)
  \;=\;
  U_{\text{ent}}\bigl(\{\eta_g^{(ij)}\},\,\theta\bigr),
  \label{eq:circuit_equivariance}
\end{equation}
so the operator built from these coefficients is the same before and after the simultaneous
transformation of its multivector arguments. This statement does not assert invariance under
transforming the constituents while holding arbitrary reference vectors fixed.
\end{lemma}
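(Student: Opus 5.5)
The argument is essentially a composition: $U_{\text{ent}}$ depends on the data only through its angle list. I will therefore show that every angle is unchanged and then conclude that the ordered product of gates is unchanged.

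\emph{Structure of the block.} I first write $U_{\text{ent}}$ explicitly as an ordered product. It consists of $\mathrm{IsingZZ}(\eta_g^{(ij)})=\exp(-\tfrac{i}{2}\eta_g^{(ij)}Z_iZ_j)$ for each pair $(i,j)$ and grade $g$, interleaved with the variational rotations parametrised by $\theta$. The per-token encoding rotations are excluded by hypothesis. Since the constituent labels and gate order are fixed, $U_{\text{ent}}$ is a fixed function $F(\{\eta_g^{(ij)}\},\theta)$ of the list of real angles and of $\theta$. Because $\theta$ is data independent, it does not change under the transformation.

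\emph{Invariance of each angle.} For each feature, I check that its multivector argument transforms as $\mv{x}_i\mapsto\rho(s)\mv{x}_i$ whenever all constituent and reference four-vectors entering it co-transform by $\Lambda$.

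\begin{itemize}
\item Grade 1: this holds directly, and Lemma~\ref{lem:minkowski} identifies the angle with the Minkowski product.
\item Grade 0: features such as $p_i\cdot P$ are themselves Minkowski products, with $P=\sum_k p_k$ co-transforming by linearity. They are therefore invariant by Lemma~\ref{lem:minkowski}.
\item Grade 2: the argument rests on the outermorphism property of the sandwich map, $\rho(s)(a\wedge b)=\rho(s)a\wedge\rho(s)b$. This follows from the fact that $\rho(s)$ is an algebra automorphism, since $s^\dagger s=1$, and that it preserves grade. Hence $p_i\wedge P\mapsto\rho(s)(p_i\wedge P)$.
\item Grade 3: $T_i$ is a sum of trivectors with invariant scalar weights, built from Minkowski products of co-transforming vectors. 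By linearity of $\rho(s)$, $T_i$ transforms covariantly.
\item Grade 4: $h_i$ is built from four-volume magnitudes. Since $\rho(s)$ fixes the pseudoscalar for $s\in\mathrm{Spin}^+(1,3)$, each of these is already a scalar. It is invariant provided $p_{\text{lead}}$ co-transforms, which is the stated hypothesis.
\end{itemize}

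With covariance of every argument established, Lemma~\ref{lem:lorentz_invariance} gives $\eta_g(\rho(s)\mv{x}_i,\rho(s)\mv{x}_j)=\eta_g^{(ij)}$ for every pair and grade. Substituting the identical angle list into $F$ then yields \eqref{eq:circuit_equivariance}.

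\emph{Main obstacle.} The delicate step is justifying covariance of the composite features at grades $2$ through $4$. Lemma~\ref{lem:lorentz_invariance} only treats arguments that transform by $\rho(s)$, so one must verify that the specific constructions, including the weights and magnitudes, commute with $\rho(s)$. I would first prove the automorphism and outermorphism property of the sandwich map. I would then check that every scalar weight is itself a Minkowski product or an absolute value of a pseudoscalar coefficient, so that it is invariant.

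It should also be stated explicitly that this argument fails if a reference vector is held fixed. For example, if $p_{\text{lead}}$ changes identity or does not transform, the covariance step breaks. This matches the final disclaimer in the statement.
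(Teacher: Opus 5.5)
Your proposal is correct and takes essentially the same route as the paper. Each $\mathrm{IsingZZ}$ angle is a Clifford pairing left unchanged by Lemma~\ref{lem:lorentz_invariance}, the variational angles $\theta$ are data independent, and so the ordered gate product is unchanged; your extra check that $p_i\wedge P$, $T_i$ and $h_i$ actually co-transform (outermorphism of the sandwich map, invariant weights $d_{ij}$, $\det\Lambda=1$) is not in the paper's proof, which takes covariance as a hypothesis, but it matches the scope discussion the paper gives right after the lemma.
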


\begin{proof}
Each $\mathrm{IsingZZ}$ gate in $U_{\text{ent}}$ has angle $\eta_g^{(ij)}$ for some grade $g$ and
pair $(i,j)$. By Lemma~\ref{lem:lorentz_invariance}, every such angle is unchanged under
$\mv{x}\mapsto\rho(s)\mv{x}$. The variational angles $\theta$ are independent of the input data
and so are unchanged as well. The operator built from these angles is therefore identical before
and after the transformation.
\end{proof}

The scope of this invariance depends on the constituent selection and the reference vectors
entering each grade. First fix the identities of the constituents represented in the circuit
and of those included in the sum $P=\sum_k p_k$. Transforming each four-momentum then gives
$P\mapsto\Lambda P$. The scalar feature $p_i\cdot P$, the grade-1 momentum $p_i$, and the
grade-2 bivector $p_i\wedge P$ consequently satisfy the transformation assumptions of
Lemma~\ref{lem:lorentz_invariance}. Their grade-wise pairings are invariant under the full
proper orthochronous Lorentz group for this fixed constituent set.

The higher-grade features used in this study are constructed from the retained $N$
constituent four-momenta, with $P=\sum_{k=1}^{N}p_k$. Defining the pairwise Minkowski
inner products $d_{ij}=p_i\cdot p_j$, we form
\begin{equation}
 T_i=\sum_j d_{ij}\,(p_i\wedge P\wedge p_j),\qquad
 h_i=\sum_j d_{ij}\,\left|\det[p_i,P,p_j,p_{\text{lead}}]\right|,
 \label{eq:reference_blades}
\end{equation}
where $p_{\text{lead}}$ is the four-momentum of the retained constituent with the highest
transverse momentum. The wedge products in $T_i$ describe oriented three-dimensional
volume elements in spacetime, weighted by the invariant relationships between particle
pairs. For a fixed constituent set, their sum transforms as a trivector under the full
proper orthochronous Lorentz group. The determinant in $h_i$ similarly describes an
oriented four-dimensional volume, but taking its absolute value removes the orientation
information. The resulting magnitude is therefore parity even, whereas an oriented
pseudoscalar changes sign under parity. We use this magnitude as the feature associated
with grade~4 and define its attention coupling as $\eta_4^{(ij)}=-h_i h_j$.
This coupling is invariant when the selected leading constituent transforms together with
the other jet constituents.

We adopt the component signatures $(+)$, $(+,-,-,-)$,
$(+,+,+,-,-,-)$, $(-,-,-,+)$ and $(-)$ for grades~0 through~4.
For grades~2 and~3 these are the negatives of the reverse-based pairing in
Eq.~\eqref{eq:clifford_ip}. Accordingly, the attention coefficients are
$\eta_g^{(ij)}=\epsilon_g\langle\grade{\mv{x}_i}{g}^{\dagger}
\grade{\mv{x}_j}{g}\rangle_0$, with $\epsilon_2=\epsilon_3=-1$ and
$\epsilon_0=\epsilon_1=\epsilon_4=1$, interpreting the grade-4 feature as just described.
These fixed sign choices preserve the invariance argument and specify the gate angles
used throughout this study.

Constituent selection is a separate limitation. The top-$N$ and leading-particle selections
by transverse momentum are preserved by azimuthal rotations and longitudinal boosts, but
need not be preserved by a general Lorentz transformation. If the constituents are reselected
after such a transformation, the gate inputs can change even at grades~0--2. If the set
entering $P$ changes, it need not equal the Lorentz transform of the original sum either.
Thus the full-group invariance of the lower-grade pairings is a statement about a fixed
constituent set, not about the complete selection and encoding pipeline.

Lemma~\ref{prop:circuit_equivariance} is a statement about the gate coefficients, not about the
circuit state. It establishes that the pattern of pairwise couplings is set by the Clifford
pairings, with the transformation properties just described, but it does not
say that the encoded state or its entanglement is frame independent. The per-token encoding rotations $R_X,R_Y,R_Z$ load the
individual multivector components of each constituent, and those components change from frame to
frame, so the input state that the entangling block acts on is itself frame dependent. An unchanged
entangling gate acting on a changed input state produces a different entangled state, and this
frame dependence applies to the component encoding of every grade, not only the grade-1
four-momentum. The complete classifier is therefore frame dependent and is not an equivariant map.
A fully equivariant quantum circuit, in which both the encoding and the variational stages transform
covariantly under $\mathrm{Spin}^+(1,3)$, would need a different encoding and is left to future work.

\section{Methodology}
\label{sec:method}

\subsection{Overview}
\label{sec:arch_overview}

Building on the geometric features and symmetry properties established above, we now
explain how QGrAT turns the constituent momenta of a jet into a classification decision.
We want the circuit to use both the features of individual constituents and the relationships
between them. Figure~\ref{fig:arch_overview} follows this flow from the classical preparation
of the inputs to the final readout. Each retained constituent is assigned a data qubit
$q_i$, giving a register $q_0\dots q_{N-1}$. Auxiliary qubits then provide a way to introduce
an explicit three-particle dependence and, in the main model, collect the classification signal.

To bring the jet information into this register, we first prepare two sets of quantities
from the constituent multivectors: rotation angles $\theta_i$ that load the individual data
qubits, and Clifford inner products $\eta_g^{(ij)}$ that determine how pairs of qubits
interact. This separates the encoding of constituent features from the geometric relationships
used to set the attention couplings. We process each active grade in turn, encoding its blade
components through single-qubit rotations (Section~\ref{sec:encoding_layer}) and applying
$\mathrm{IsingZZ}(\eta_g^{(ij)})$ gates between particle pairs. Four-momentum information is
always present through grade~1, whose pairing is the Minkowski dot product
(Lemma~\ref{lem:minkowski}). Alongside it, scalar, bivector and trivector features introduce
other geometric information, while the grade-4 four-volume magnitude provides an additional
feature whose effect we assess separately.

Beyond these pairwise couplings, we ask whether explicitly introducing a product involving
three constituents helps the circuit recognise the structure of a jet. For this purpose,
we add an ancilla $q_T$ that controls phase rotations over triples of data qubits, with angles
set by the product of their three pairwise Minkowski inner products. This construction uses
the principle of the Hadamard test~\cite{cleve1998quantum,nielsen2010quantum,ekert2002direct}
and carries the resulting correlations forward without an intermediate measurement
(Section~\ref{sec:injected_scalars}). A shallow variational (VQC) block then mixes the data
and ancilla qubits, allowing the model to learn how to combine the information supplied by
the attention and the three-particle product (Section~\ref{sec:vqc_layer}).

Once this state has been prepared, we need to extract a classification score. We compare two
ways of doing so while keeping the architecture of the preceding circuit fixed. In the main model, a readout
qubit $q_r$ interacts with the data and ancilla qubits, and its measured expectation value
determines the classification logit (Section~\ref{sec:qgatqr_readout}). Alternatively, we
measure pairwise Pauli correlations across the data and ancilla register and combine them
in a learned classical Hamiltonian-observable head (Section~\ref{sec:qgat_readout}). This
comparison lets us assess how the readout choice affects classification within a common
quantum architecture.

Alongside this readout comparison, we vary the active grades, the number of constituents and
the presence of the three-particle-product ancilla to establish what each choice contributes
to the performance reported in Section~\ref{sec:results}. The following subsections develop
the individual stages in detail. To make their gate structure easy to follow,
Figs.~\ref{fig:block_attention}--\ref{fig:block_tap} use an illustrative register of four data
qubits and a single attention layer ($L=1$). The same construction extends to the
$N\in\{8,10\}$ constituents used in our experiments (Section~\ref{sec:experimental_setup}).

\begin{figure}[t]
\centering
\resizebox{\linewidth}{!}{%
\begin{tikzpicture}
  \begin{scope}[yshift=0cm]
    \fill[gray!7, rounded corners=3pt]  (-1.45,0.55) rectangle (4.2,2.8);
    \fill[blue!4, rounded corners=3pt]  (3.95,-4.1) rectangle (16.4,1.0);
    \node[hdreg] at (1.4,2.5) {classical encoder};
    \node[font=\hdfont\small, anchor=west] at (-1.1,3.35) {(a)\ \ Main model: in-circuit quantum readout};
    \qgatfrontend{a}
    \node[hdlane] at (3.85,-3.6) {$q_r$ (readout)};
    \draw[hdwire] (4.5,-3.6) -- (14.16,-3.6);
    \node[hdgate={red!30}, minimum width=0.55cm, minimum height=0.55cm] (aHr) at (5.5,-3.6) {$H$};
    \draw[hdwire] (13.9,0) -- (14.16,0);
    \draw[hdwire] (13.9,-1.2) -- (14.16,-1.2);
    \node[hdgate={red!16}, text width=2.0cm, minimum height=4.4cm] (aro) at (15.2,-1.8)
         {In-circuit\\ readout $q_r$\\ IsingZZ ${+}$\\ IsingYY to\\ (data,$q_T$);\\ $R_YR_Z$};
    \node[hdio] (aout) at (18.2,-3.6) {logit\\ $\sigma(a\langle Z_r\rangle{+}b)$};
    \draw[hdarr] (aro.east|-aout) -- (aout.west);
  \end{scope}
  \begin{scope}[yshift=-8.85cm]
    \fill[gray!7, rounded corners=3pt]  (-1.45,0.55) rectangle (4.2,2.8);
    \fill[blue!4, rounded corners=3pt]  (3.95,-2.9) rectangle (16.0,1.0);
    \fill[gray!7, rounded corners=3pt]  (16.6,-2.05) rectangle (19.45,-0.1);
    \node[hdreg] at (1.4,2.5) {classical encoder};
    \node[hdreg] at (18.05,-0.35) {classical head};
    \node[font=\hdfont\small, anchor=west] at (-1.1,3.35) {(b)\ \ Classical readout: Hamiltonian-observable head};
    \qgatfrontend{c}
    \draw[hdwire] (13.9,0) -- (14.06,0);
    \draw[hdwire] (13.9,-1.2) -- (14.06,-1.2);
    \node[hdgate={orange!16}, text width=1.85cm, minimum height=2.9cm] (ccorr) at (15.0,-1.2)
         {Measure\\ pair corr.\\ $\langle ZZ\rangle,\langle XX\rangle,$\\ $\langle YY\rangle$\\ (data,$q_T$)};
    \node[hdblock={gray!24}, text width=2.3cm, minimum height=1.2cm] (cham) at (18.05,-1.2)
         {Classical\\ Hamiltonian\\ head};
    \node[hdio] (cout) at (21.2,-1.2) {logit\\ $\sigma(\mv{w}^{\!\top}\mv{c}{+}b)$};
    \draw[hdarr] (ccorr.east) -- (cham.west);
    \draw[hdarr] (cham.east) -- (cout.west);
  \end{scope}
\end{tikzpicture}%
}
\caption{The two QGrAT quantum circuits studied in this work. Both share the classical
encoder, grade-aware $\mathrm{IsingZZ}(\eta_g)$ attention, the three-particle-product ancilla
tap, and the variational (VQC) mixer, and differ only in the readout stage. \textbf{(a)} Main
model, an in-circuit readout qubit $q_r$ coupled to the data register and the
three-particle-product ancilla $q_T$. \textbf{(b)} Classical-readout variant, where $q_r$ is
removed and Pauli pair correlators over the data and ancilla qubits feed a learned
Hamiltonian-observable head. Shaded boxes are classical stages drawn only as the source of the
circuit inputs. Gate-level building blocks appear in
Figs.~\ref{fig:block_attention}--\ref{fig:block_tap}.}
\label{fig:arch_overview}
\end{figure}

\subsection{Encoding Multivectors into Rotation Angles}
\label{sec:encoding_layer}

The encoding layer maps each token's 16-component Clifford multivector $\mv{x}_i\in\Cl(1,3)$
to a vector of rotation angles via a learned encoder
$\mathcal{E}:\Cl(1,3)\to\RR^{n_\alpha}$.
Following the grade decomposition of Section~\ref{sec:clifford}, the encoder treats each
grade subspace separately to preserve algebraic structure.

For grades 0 and 4 (scalar and four-volume magnitude, one component each), the corresponding
component is multiplied by a learned scalar and passed through $\arctan$, producing a single
rotation angle.
For grade 1 (four-momentum, four dimensions $E,p_x,p_y,p_z$), each of the four components is
first multiplied by its own learned scale and then passed through $\arctan$, giving four
angles that are realised as the gate sequence
$R_X(\alpha^E)\,R_Y(\alpha^{p_x})\,R_Z(\alpha^{p_y})\,R_X(\alpha^{p_z})$ on the qubit.
The four grade-1 scales form a trainable vector $\mathbf{s}^{(1)}\in\RR^4$ (one gain per
momentum component). The encoded angle for blade $b$ is
$\alpha^{(1)}_b=\arctan\!\big(s^{(1)}_b\,x^{(1)}_b\big)$, where $x^{(1)}_b$ is the raw
four-momentum component.

Grades 2 and 3 are encoded in exactly the same way, and this is the direct higher-grade
analogue of the grade-1 scales. The grade-2 (bivector) blade is the six-component wedge
product $p_i\wedge P$ of the token against the total jet four-momentum $P=\sum_k p_k$, and
the grade-3 feature is the four-component weighted trivector $T_i$ of
Eq.~\eqref{eq:reference_blades}, with the transformation behaviour of $P$ and
$p_{\text{lead}}$ set out in Section~\ref{sec:ising_invariance}. Each grade carries its own trainable
scale vector, $\mathbf{s}^{(2)}\in\RR^6$ for the six bivector blades and
$\mathbf{s}^{(3)}\in\RR^4$ for the four trivector blades,
playing the same per-blade role that $\mathbf{s}^{(1)}$ plays for the four-momentum.
Every blade is turned into an angle by the same rule
$\alpha^{(g)}_b=\arctan\!\big(s^{(g)}_b\,x^{(g)}_b\big)$, and the angles are applied as
single-qubit rotations on the token's qubit, the six bivector angles as
$R_X\,R_Y\,R_Z\,R_X\,R_Y\,R_Z$ and the four trivector angles as $R_X\,R_Y\,R_Z\,R_X$.
Every supplied component affects the feature map, but successive rotations on one qubit
do not imply an injective encoding of all six or four coordinates.
Across the active grades $0,1,2,3$ the encoder therefore contributes $1+4+6+4=15$ learned
scales, producing $15$ rotation angles per token, all trained via backpropagation.

The $\arctan$ activation keeps each encoded rotation angle in the bounded interval
$(-\pi/2,\pi/2)$.
The component-wise encoding is frame dependent at every grade, because each blade component is
loaded as its own rotation angle and these components change from frame to frame, as discussed in
Section~\ref{sec:ising_invariance}.

\subsection{Grade-Aware Attention Layer}
\label{sec:attn_layer}

The attention layer operates on $N$ data qubits, one per jet constituent.
The full multi-grade block is drawn in Fig.~\ref{fig:block_attention}.
For each attention layer $\ell\in\{1,\ldots,L\}$, the block is applied
in ascending order of active grade to all data qubits. The grade-1 example below
illustrates the three operations repeated for every active grade.
The attention block consists of three sub-operations.

\begin{figure}[H]
\centering
\resizebox{\textwidth}{1.1cm}{%
\begin{quantikz}[column sep=0.20cm, row sep=0.18cm]
  \lstick{$q_0$}
    & \gate[style={fill=gray!18}]{R_Y(\alpha_0)}
    & \gate[wires=4,style={fill=gray!30}]{\mathrm{ZZ}_0}
    & \gate[style={fill=green!18}]{R_Y R_Z}
    & \gate[style={fill=blue!12}]{R_X}
    & \gate[style={fill=blue!22}]{R_Y}
    & \gate[style={fill=blue!32}]{R_Z}
    & \gate[style={fill=blue!42}]{R_X}
    & \gate[wires=4,style={fill=orange!25}]{\mathrm{ZZ}_1}
    & \gate[style={fill=green!25}]{R_Y R_Z}
    & \gate[style={fill=purple!12}]{R_X}
    & \gate[style={fill=purple!20}]{R_Y}
    & \gate[style={fill=purple!28}]{R_Z}
    & \gate[style={fill=purple!36}]{R_X}
    & \gate[style={fill=purple!44}]{R_Y}
    & \gate[style={fill=purple!52}]{R_Z}
    & \gate[wires=4,style={fill=purple!30}]{\mathrm{ZZ}_2}
    & \gate[style={fill=green!32}]{R_Y R_Z}
    & \gate[style={fill=orange!18}]{R_X}
    & \gate[style={fill=orange!26}]{R_Y}
    & \gate[style={fill=orange!34}]{R_Z}
    & \gate[style={fill=orange!42}]{R_X}
    & \gate[wires=4,style={fill=orange!40}]{\mathrm{ZZ}_3}
    & \gate[style={fill=green!38}]{R_Y R_Z}
    & \gate[style={fill=red!20}]{R_Y}
    & \gate[wires=4,style={fill=red!38}]{\mathrm{ZZ}_4}
    & \gate[style={fill=green!45}]{R_Y R_Z}
    & \qw \\
  \lstick{$q_1$}
    & \gate[style={fill=gray!18}]{R_Y}
    &
    & \gate[style={fill=green!18}]{R_Y R_Z}
    & \gate[style={fill=blue!12}]{R_X}
    & \gate[style={fill=blue!22}]{R_Y}
    & \gate[style={fill=blue!32}]{R_Z}
    & \gate[style={fill=blue!42}]{R_X}
    &
    & \gate[style={fill=green!25}]{R_Y R_Z}
    & \gate[style={fill=purple!12}]{R_X}
    & \gate[style={fill=purple!20}]{R_Y}
    & \gate[style={fill=purple!28}]{R_Z}
    & \gate[style={fill=purple!36}]{R_X}
    & \gate[style={fill=purple!44}]{R_Y}
    & \gate[style={fill=purple!52}]{R_Z}
    &
    & \gate[style={fill=green!32}]{R_Y R_Z}
    & \gate[style={fill=orange!18}]{R_X}
    & \gate[style={fill=orange!26}]{R_Y}
    & \gate[style={fill=orange!34}]{R_Z}
    & \gate[style={fill=orange!42}]{R_X}
    &
    & \gate[style={fill=green!38}]{R_Y R_Z}
    & \gate[style={fill=red!20}]{R_Y}
    &
    & \gate[style={fill=green!45}]{R_Y R_Z}
    & \qw \\
  \lstick{$q_2$}
    & \gate[style={fill=gray!18}]{R_Y}
    &
    & \gate[style={fill=green!18}]{R_Y R_Z}
    & \gate[style={fill=blue!12}]{R_X}
    & \gate[style={fill=blue!22}]{R_Y}
    & \gate[style={fill=blue!32}]{R_Z}
    & \gate[style={fill=blue!42}]{R_X}
    &
    & \gate[style={fill=green!25}]{R_Y R_Z}
    & \gate[style={fill=purple!12}]{R_X}
    & \gate[style={fill=purple!20}]{R_Y}
    & \gate[style={fill=purple!28}]{R_Z}
    & \gate[style={fill=purple!36}]{R_X}
    & \gate[style={fill=purple!44}]{R_Y}
    & \gate[style={fill=purple!52}]{R_Z}
    &
    & \gate[style={fill=green!32}]{R_Y R_Z}
    & \gate[style={fill=orange!18}]{R_X}
    & \gate[style={fill=orange!26}]{R_Y}
    & \gate[style={fill=orange!34}]{R_Z}
    & \gate[style={fill=orange!42}]{R_X}
    &
    & \gate[style={fill=green!38}]{R_Y R_Z}
    & \gate[style={fill=red!20}]{R_Y}
    &
    & \gate[style={fill=green!45}]{R_Y R_Z}
    & \qw \\
  \lstick{$q_3$}
    & \gate[style={fill=gray!18}]{R_Y}
    &
    & \gate[style={fill=green!18}]{R_Y R_Z}
    & \gate[style={fill=blue!12}]{R_X}
    & \gate[style={fill=blue!22}]{R_Y}
    & \gate[style={fill=blue!32}]{R_Z}
    & \gate[style={fill=blue!42}]{R_X}
    &
    & \gate[style={fill=green!25}]{R_Y R_Z}
    & \gate[style={fill=purple!12}]{R_X}
    & \gate[style={fill=purple!20}]{R_Y}
    & \gate[style={fill=purple!28}]{R_Z}
    & \gate[style={fill=purple!36}]{R_X}
    & \gate[style={fill=purple!44}]{R_Y}
    & \gate[style={fill=purple!52}]{R_Z}
    &
    & \gate[style={fill=green!32}]{R_Y R_Z}
    & \gate[style={fill=orange!18}]{R_X}
    & \gate[style={fill=orange!26}]{R_Y}
    & \gate[style={fill=orange!34}]{R_Z}
    & \gate[style={fill=orange!42}]{R_X}
    &
    & \gate[style={fill=green!38}]{R_Y R_Z}
    & \gate[style={fill=red!20}]{R_Y}
    &
    & \gate[style={fill=green!45}]{R_Y R_Z}
    & \qw
\end{quantikz}%
}
\caption{%
  \textbf{Full Grade-Aware Attention block} at $N{=}4$ qubits (one per constituent),
  showing all active Clifford grades $g\in\{0,1,2,3,4\}$. The mechanism is described in
  Section~\ref{sec:attn_layer}.
  \textcolor{gray!60!black}{\textbf{Gray}} (grade~0, scalar $p{\cdot}P$):
    one $R_Y(\alpha_0)$ per qubit.
  \textcolor{blue!60!black}{\textbf{Blue gradient}} (grade~1, four-momentum):
    $R_X(\alpha^E)\,R_Y(\alpha^{p_x})\,R_Z(\alpha^{p_y})\,R_X(\alpha^{p_z})$.
  \textcolor{purple!60!black}{\textbf{Purple gradient}} (grade~2, bivector $p{\wedge}P$):
    $R_X\,R_Y\,R_Z\,R_X\,R_Y\,R_Z$, one per bivector blade.
  \textcolor{orange!70!black}{\textbf{Orange gradient}} (grade~3, trivector $T_i$):
    $R_X\,R_Y\,R_Z\,R_X$, one per trivector blade.
  \textcolor{red!70!black}{\textbf{Red}} (grade~4, four-volume magnitude $h_i$):
    a single $R_X$.
  The $\mathrm{ZZ}_g$ blocks (multi-qubit) carry the Clifford-pairing coupling angles
  $\eta_g^{(ij)}$, with the symmetry scope specified in Section~\ref{sec:ising_invariance},
  and the green $R_Y R_Z$ are independent variational rotations for each grade and wire.
  Configurations restricted to grades $0$-$3$ simply omit the grade-4 (red) section.}
\label{fig:block_attention}
\end{figure}

First, in the data-encoding step, the grade-1 angles produced by the encoder
$\mathcal{E}(\mv{x}_i)$ are applied as rotations on qubit $i$, simultaneously for all
$N$ qubits.
Stacking multiple attention layers ($L>1$) re-encodes the same input multivectors at each
layer, which corresponds to an instance of the data re-uploading
technique~\cite{perez_garcia_reuploading}.
In the present work we use a single attention layer ($L=1$), so the grade-1 component is
encoded exactly once per circuit execution.

Second, in the Lorentz-invariant coupling step, for every qubit pair $(i,j)$ with $i<j$ the
two-qubit gate
\begin{equation}
  \mathrm{IsingZZ}(\eta_1^{(ij)})
  = \exp\!\bigl(-i\,\tfrac{\eta_1^{(ij)}}{2}\,Z_i\otimes Z_j\bigr)
  \label{eq:ising_gate}
\end{equation}
is applied, where the angle $\eta_1^{(ij)}$ is the grade-1 Clifford inner product
defined in equation~\eqref{eq:clifford_ip}, which reduces to the Minkowski inner product
$\eta_{\mu\nu}p_i^\mu p_j^\nu$ of the two four-momenta (Section~\ref{sec:minkowski}).
This angle is fixed by the input data and is not a trainable parameter.
By Lemma~\ref{lem:lorentz_invariance}, $\eta_1^{(ij)}$ is a Lorentz scalar, so this
coupling angle is Lorentz invariant for a fixed pair of constituent identities.
The attention block contributes $\binom{N}{2}$ IsingZZ gates.

Third, in the variational-mixing step, rotations $R_Y(\theta_i)\,R_Z(\varphi_i)$ are applied to
each qubit, with parameters $(\theta,\varphi)\in\RR^{N\times2}$ trained via parameter-shift
on hardware (or backpropagation on classical simulators).
These variational parameters control how the encoded grade-1 information is weighted before
the readout stage.

The layer is repeated $L$ times to allow the
circuit to access higher-order polynomial features of the input.
For $G$ active grades the attention stage has $2NGL$ trainable rotation
parameters. These are independent for each layer, grade and data wire and there are
no trainable pairwise attention angles. Encoder scales are shared across tokens and layers.

The state after the attention stage is
\begin{equation}
  \ket{\psi_{\text{attn}}}
  = \prod_{\ell=1}^{L}
    \Bigl[\prod_{g\in\mathcal{G}}^{\leftarrow}
      V_g^{(\ell)}\,\prod_{i<j}\mathrm{IsingZZ}(\eta_g^{(ij)})\,U_{\text{enc},g}\Bigr]
    \ket{0}^{\otimes N},
  \label{eq:attention_state}
\end{equation}
where $\mathcal{G}$ is the ordered active-grade set and the arrow means that
lower grades act first. Each grade executes encoding, pairwise ZZ gates, then
its own $R_Y$ followed by $R_Z$ rotations. Taps follow the final grade of the final layer.

\subsection{Physics-Informed Ancilla Tap}
\label{sec:phase_ancilla}
\label{sec:threebody_formalism}
\label{sec:injected_scalars}

After the attention block, a dedicated ancilla qubit injects a higher-order Lorentz scalar into
the circuit through a Hadamard
test~\cite{cleve1998quantum,nielsen2010quantum,ekert2002direct}. The tap is drawn in
Fig.~\ref{fig:block_tap}.
The injected triple-product coefficients are Lorentz scalars for a fixed constituent set.
This property concerns the coefficients and does not imply invariance of the encoded state
or the complete classifier. Each ancilla is initialised to $\ket{+}=(\ket{0}+\ket{1})/\sqrt{2}$ by a
Hadamard gate, which places it on the equator of the Bloch sphere where it is maximally
sensitive to accumulated phase. It is not measured directly but carried forward, so that its
phase is propagated into the readout stage (Section~\ref{sec:qgatqr_readout}).
The two-particle information is already carried by the grade-aware attention encoding, where
the pairwise Minkowski inner products
\begin{equation}
  \eta_1^{(ij)} \;=\; p_i \cdot p_j \;=\; E_i E_j - \vec{p}_i\cdot\vec{p}_j
  \label{eq:eta1_def}
\end{equation}
set the grade-1 $\mathrm{IsingZZ}$ angles. The tap below adds a three-particle product that this
pairwise encoding does not explicitly inject.

\begin{figure}[H]
\centering
\resizebox{!}{2cm}{%
\begin{quantikz}[column sep=0.55cm, row sep=0.25cm]
  \lstick{$q_0$}
    & \qw
    & \gate[wires=3,style={fill=cyan!14}]{\mathrm{MultiRZ}(s\widetilde f_{ijk})}
    & \qw \\
  \lstick{$q_1$} & \qw & & \qw \\[3pt]
  \lstick{$q_2$} & \qw & & \qw \\
  \lstick{$q_3$} & \qw & & \qw \\
  \lstick{$q_T$}
    & \gate[style={fill=cyan!30}]{H}
    & \ctrl[style={fill=cyan!70!black},color=cyan!70!black]{-2}
    & \qw \rstick{\footnotesize to QR}
\end{quantikz}
}
\caption{%
  \textbf{Block Tap: Hadamard-test ancilla tap} at an illustrative $N{=}4$ data
  qubits. The generator and its gate realisation are given in
  Section~\ref{sec:phase_ancilla}. The ancilla is placed in $|+\rangle$ by a Hadamard,
  controls a diagonal unitary on the data register, and is carried forward to the in-circuit
  readout qubit (``to QR'', Fig.~\ref{fig:block_readout}), never measured directly. The
  \textcolor{cyan!60!black}{\textbf{cyan}} three-particle-product tap $q_T$ applies a
  controlled $\mathrm{MultiRZ}$ over data-qubit triples.}
\label{fig:block_tap}
\end{figure}

A dedicated ancilla qubit $q_T$ records a three-particle Lorentz scalar. For every triple of
constituents $(i,j,k)$ we form the product of the three pairwise Minkowski inner products,
\begin{equation}
  f_{ijk} \;=\; \eta_1^{(ij)}\,\eta_1^{(ik)}\,\eta_1^{(jk)},
  \label{eq:fijk_main}
\end{equation}
using the same grade-1 inner product $\eta_1^{(ij)}$ as the attention layer. The ancilla
$q_T$ controls a $\mathrm{MultiRZ}$ gate, that is a controlled $ZZZ$ rotation, over each
triple with angle $s\,\widetilde f_{ijk}$, where $s$ is a single trainable scale. Because all
$Z_iZ_jZ_k$ generators commute, the controlled block is a single controlled unitary
$U=\exp(-\tfrac{i}{2}\,\hat{G})$ with
\begin{equation}
  \hat{G} \;=\; s\sum_{i<j<k} \widetilde f_{ijk}\,Z_iZ_jZ_k,
  \label{eq:phase_generator}
\end{equation}
and the joint ancilla-data state becomes
$\tfrac{1}{\sqrt{2}}\bigl(\ket{0}\ket{\psi}+\ket{1}\,U\ket{\psi}\bigr)$, so the Hadamard test
writes the accumulated three-particle phase into the coherent state of $q_T$. Each
$\eta_1^{(ij)}$ is a Lorentz scalar (Lemma~\ref{lem:minkowski}), so $f_{ijk}$ and the
generator $\hat{G}$ are unchanged under Lorentz transformations of a fixed constituent set
with fixed qubit assignments. The tap therefore introduces no further reference-vector
restriction on the coupling symmetry described in Section~\ref{sec:ising_invariance}.
The triple-product feature is the product of the three pairwise Minkowski inner products of a
constituent triple. We inject it explicitly as a three-body coupling, which adds this
higher-order term to the quantum attention circuit and improves the reported performance
relative to the no-tap configuration. The choice
is motivated by the top decay $t\to Wb\to q\bar{q}b$, which produces three hard prongs, and by
the standard three-prong shape variables such as the $\tau_{32}$ and energy correlation ratios,
which probe the same kind of structure. The tap registers whether the jet energy is shared among
three centres rather than two, the structure that separates a boosted top jet from the background.
The three-particle-product ancilla adds one qubit, $\binom{N}{3}$ controlled $ZZZ$ rotations,
and a single trainable scale $s$.

Before injection, the triple coefficients are standardized using one mean and
sample standard deviation over all jets and triples in the current batch:
$\widetilde f_{ijk}=(f_{ijk}-\mu_B)/\max(\sigma_B,10^{-6})$.
The attention pairings themselves are not standardized. Thus predictions depend
on the batch composition, including at evaluation time, and no frozen training-set
normalization is used. Before subsequent mixing, the tap's $X$ and $Y$
expectations encode the real and imaginary parts of $\langle\psi|U|\psi\rangle$,
rather than directly measuring each coefficient or $\langle\hat G\rangle$.

\subsection{Variational Layer}
\label{sec:vqc_layer}

The VQC block follows the strongly-entangling / hardware-efficient
ansatz~\cite{schuld_circuit_centric,sim_expressibility} and is drawn in
Fig.~\ref{fig:block_vqc}. For each of $L_{\rm v}$ layers it
applies a trainable single-qubit rotation $R_Y(\phi_{i})\,R_Z(\psi_{i})$ to every qubit,
followed by a ring of nearest-neighbour entangling gates
$\mathrm{CNOT}(q_i,q_{(i+1)\bmod M})$, where $M=N+A$ includes the $A$ enabled tap ancillas
and excludes the readout qubit.
The VQC therefore introduces cross-qubit entanglement, and its rotation
angles are trained jointly with the rest of the circuit.
Ring connectivity can substantially outperform a linear chain in expressibility and
entangling capability~\cite{sim_expressibility}, motivating the ring topology.
The depth $L_{\rm v}$ is kept small (one layer in our experiments) to limit
barren-plateau exposure, consistent with the shallow-ansatz guidance
of~\cite{mcclean_barren}.

The VQC block adds $2M L_{\rm v}$ trainable rotation parameters and $M L_{\rm v}$ two-qubit
gates. It is present in every configuration we report and supplies the model's trainable,
entangling nonlinearity. It acts after the attention and ancilla-tap stages, mixing the data
qubits together with the enabled ancilla before the readout.

\begin{figure}[H]
\centering
\resizebox{!}{2cm}{%
\begin{quantikz}[column sep=0.32cm]
  \lstick{$q_0$}
    & \gate[style={fill=teal!22}]{R_Y R_Z(\phi_0,\psi_0)}
    & \ctrl[style={fill=teal!70!black},color=teal!70!black]{1}
    & \qw & \qw
    & \targ{}
    & \qw \\
  \lstick{$q_1$}
    & \gate[style={fill=teal!22}]{R_Y R_Z(\phi_1,\psi_1)}
    & \targ{}
    & \ctrl[style={fill=teal!70!black},color=teal!70!black]{1}
    & \qw & \qw
    & \qw \\
  \lstick{$q_2$}
    & \gate[style={fill=teal!22}]{R_Y R_Z(\phi_2,\psi_2)}
    & \qw
    & \targ{}
    & \ctrl[style={fill=teal!70!black},color=teal!70!black]{1}
    & \qw
    & \qw \\
  \lstick{$q_3$}
    & \gate[style={fill=teal!22}]{R_Y R_Z(\phi_3,\psi_3)}
    & \qw & \qw
    & \targ{}
    & \ctrl[style={fill=teal!70!black},color=teal!70!black]{-3}
    & \qw
\end{quantikz}
}
\caption{%
  \textbf{Block VQC: VQC activation}, the trainable, entangling activation used in all three
  configurations, shown at an illustrative $N{=}4$ qubits for one layer $L_{\rm v}{=}1$,
  see Section~\ref{sec:vqc_layer}.
  \textcolor{teal!60!black}{\textbf{Teal boxes}}: trainable single-qubit rotations
  $R_Y(\phi_i)R_Z(\psi_i)$ on every qubit.
  \textcolor{teal!70!black}{\textbf{Teal dots}}: a ring of nearest-neighbour
  $\mathrm{CNOT}(q_i,q_{(i+1)\bmod N})$ entangling gates.}
\label{fig:block_vqc}
\end{figure}

\subsection{Classification Layer of QGrAT}
\label{sec:qgat_readout}

After the variational layer, information about the jet is contained in the joint quantum
state of the data and ancilla qubits. To make a classification decision, we need to turn
measurements of this state into a single score. We first describe a classical combination
of measured correlations, before introducing a readout that collects the signal on an
additional qubit.

\subsubsection{Classical Hamiltonian Observable}
\label{sec:qgatcl_readout}

A direct way to extract information from the prepared state is to measure how the outcomes
of different qubits are related. We use pair correlations because they probe the joint state
built by the attention, ancilla and variational stages. For example, measuring two qubits in
the $Z$ basis gives outcomes labelled $+1$ or $-1$. Their product is positive when the
outcomes agree and negative when they differ. Its average over repeated preparations is the
correlator $\expval{Z_iZ_j}_{\psi}$. Measurements in the $X$ and $Y$ bases probe correlations
along two other qubit measurement axes. Together, these three choices give a broader view
of the state than measurements in a single basis~\cite{nielsen2010quantum}.

For a register of $M=N+A$ qubits, including the $N$ data qubits and $A$ enabled ancillas,
there are $P=M(M-1)/2$ distinct pairs. Collecting one correlator per pair in each basis gives
a vector of $3P$ real-valued features,
\begin{equation}
  \mv{c}
  = \bigl(\expval{Z_iZ_j}_{\psi},\,\expval{X_iX_j}_{\psi},\,\expval{Y_iY_j}_{\psi}\bigr)_{i<j}
  \in\RR^{3P},
  \label{eq:correlators}
\end{equation}
where $\ket{\psi}$ denotes the state after the variational layer. Each entry lies between
$-1$ and $+1$ and summarises one pairwise measurement statistic.

Once these features are available, we learn how strongly each should contribute to the
classification decision. A classical weighted sum, together with a bias $b$, produces the
score, or logit,
\begin{equation}
  \ell = \mv{w}_{\text{ham}}^\top\mv{c}+b,
  \quad \mv{w}_{\text{ham}}\in\RR^{3P}.
  \label{eq:ham_readout}
\end{equation}
The sigmoid $\sigma(\ell)=1/(1+e^{-\ell})$ maps this score to the interval $[0,1]$ for binary
classification. Although the final combination is linear, its inputs are the correlations
produced by the full trainable quantum circuit.

We refer to this construction as a learned Hamiltonian observable because the same weighted
sum can be written as the expectation value of a Hermitian operator~\cite{cerezo_vqe_review}.
By linearity of expectation values,
\begin{equation}
  \ell \;=\; \expval{\hat{H}}_\psi,
  \qquad
  \hat{H} \;=\; \sum_{i<j}\Bigl(w_{ij}^{ZZ}\,Z_iZ_j \,+\, w_{ij}^{XX}\,X_iX_j \,+\, w_{ij}^{YY}\,Y_iY_j\Bigr)+bI.
  \label{eq:ham_observable}
\end{equation}
This is a two-body observable in which each non-identity term acts on two qubits. The identity term
contributes the bias because $\expval{I}_{\psi}=1$. The role of $\hat H$ is to define a
classification observable whose coefficients are learned from labelled jets. Its weights
and bias are classical parameters, trained by backpropagation alongside the circuit
parameters. Their gradients follow directly from the weighted sum and require no
parameter-shift evaluations, since these parameters do not enter any quantum gate.

This choice makes the number of readout parameters grow with the number of qubit pairs.
For $N=10$ and one three-particle-product ancilla, there are $\binom{11}{2}=55$ pairs,
giving $165$ correlator weights and one bias, or $166$ readout parameters. Including the
preceding quantum circuit and encoding, the grades~0--3 model with this readout has $284$
trainable parameters.

On quantum hardware, all $ZZ$ correlators can be estimated from the same set of measurements
of the register in the $Z$ basis, by averaging the products of the outcomes for each pair.
Repeating this procedure in the $X$ and $Y$ bases supplies the remaining features. Thus
three measurement settings suffice, with repeated circuit executions in each setting to
estimate the expectation values. In the exact state-vector simulations used here, these
expectation values are evaluated directly.

\subsubsection{In-Circuit Quantum Readout}
\label{sec:qgatqr_readout}

An alternative to combining measured correlations classically is to collect the
classification signal on an additional qubit before measurement. We use this approach in
the main QGrAT model, adding a readout qubit $q_r$ to the $N$ data qubits and the
three-particle-product ancilla. Its purpose is to turn information distributed across the
register into a signal accessible through a single-qubit observable. We therefore train
its interactions with the other qubits so that its final measurement becomes informative
about the jet label. Figure~\ref{fig:block_readout} follows this process through preparation
of $q_r$, its couplings to the register, local rotations and measurement.

\begin{figure}[H]
\centering
\resizebox{0.95\textwidth}{!}{%
\begin{quantikz}[column sep=0.15cm, row sep=0.2cm]
  \lstick{$q_0$}
    & \qw & \ctrl{5} & \qw & \qw & \qw & \qw
    & \ctrl{5} & \qw & \qw & \qw & \qw
    & \qw & \qw \\
  \lstick{$q_1$}
    & \qw & \qw & \ctrl{4} & \qw & \qw & \qw
    & \qw & \ctrl{4} & \qw & \qw & \qw
    & \qw & \qw \\
  \lstick{$q_2$}
    & \qw & \qw & \qw & \ctrl{3} & \qw & \qw
    & \qw & \qw & \ctrl{3} & \qw & \qw
    & \qw & \qw \\
  \lstick{$q_3$}
    & \qw & \qw & \qw & \qw & \ctrl{2} & \qw
    & \qw & \qw & \qw & \ctrl{2} & \qw
    & \qw & \qw \\
  \lstick{$q_T$}
    & \qw & \qw & \qw & \qw & \qw & \ctrl{1}
    & \qw & \qw & \qw & \qw & \ctrl{1}
    & \qw & \qw \\
  \lstick{$q_r$}
    & \gate[style={fill=red!35}]{H}
    & \gate[style={fill=red!22}]{\mathrm{ZZ}_0}
    & \gate[style={fill=red!22}]{\mathrm{ZZ}_1}
    & \gate[style={fill=red!22}]{\mathrm{ZZ}_2}
    & \gate[style={fill=red!22}]{\mathrm{ZZ}_3}
    & \gate[style={fill=red!22}]{\mathrm{ZZ}_T}
    & \gate[style={fill=red!12}]{\mathrm{YY}_0}
    & \gate[style={fill=red!12}]{\mathrm{YY}_1}
    & \gate[style={fill=red!12}]{\mathrm{YY}_2}
    & \gate[style={fill=red!12}]{\mathrm{YY}_3}
    & \gate[style={fill=red!12}]{\mathrm{YY}_T}
    & \gate[style={fill=red!25}]{R_Y R_Z}
    & \meter[style={fill=gray!18}]{}
\end{quantikz}
}
\caption{%
  \textbf{Block QR: in-circuit Quantum Readout} on an auxiliary qubit $q_r$, drawn
  at an illustrative $N{=}4$ data qubits plus the three-particle-product ancilla
  $q_T$ (any enabled ancilla is coupled the same way), see Section~\ref{sec:qgatqr_readout}.
  \textcolor{red!70!black}{\textbf{Dark red}}: Hadamard places $q_r$ in $|+\rangle$.
  \textcolor{red!60!black}{\textbf{Red blocks}}: individual
  $\mathrm{IsingZZ}(\lambda^Z_i)$ and $\mathrm{IsingYY}(\lambda^Y_i)$ two-qubit gates
  between $q_r$ and each of the data qubits and $q_T$, shown with a control dot on the
  data/ancilla wire and the gate label on $q_r$.
  \textcolor{red!60!black}{\textbf{Light red}}: learned basis rotation
  $R_Y(\mu_0)$ followed by $R_Z(\mu_1)$ on $q_r$.
  \textcolor{gray!50!black}{\textbf{Grey}}: $\langle Z_r\rangle$ measurement,
  feeding $\ell=a\langle Z_r\rangle+b$.}
\label{fig:block_readout}
\end{figure}

We begin with a Hadamard gate, which prepares $q_r$ in the equal superposition
$\ket{+}=(\ket{0}+\ket{1})/\sqrt{2}$. Having both components present allows the subsequent
phase couplings to correlate the readout with the rest of the register. As in the attention
stage, the resulting phase information can then affect measurement probabilities through
later mixing and rotations.

To collect information from all parts of the state, we couple $q_r$ to every data qubit and
to the ancilla $q_T$. First, a row of $\mathrm{IsingZZ}(\lambda^Z_w)$ gates connects each
qubit $w$ to the readout, a second row of $\mathrm{IsingYY}(\lambda^Y_w)$ gates then acts on
the same pairs. Each angle $\lambda^Z_w,\lambda^Y_w\in\RR$ is learned independently, allowing
training to adjust how the readout responds to the state of the other qubits. By this stage,
the ancilla has already interacted with the data through the three-particle tap and the
variational layer. These readout couplings therefore act on the joint state produced by
those stages, carrying information about both the constituents and their correlations into
the final measurement.

After these interactions, we apply $R_Y(\mu_0)$ followed by $R_Z(\mu_1)$ to $q_r$ before
measuring it in the $Z$ basis. The $R_Y$ rotation selects which component of the readout state
is probed by this measurement.\footnote{With this gate order, the final $R_Z$ rotation
commutes with the measured $Z$ observable and leaves its expectation value unchanged.
Its angle $\mu_1$ is retained in the quoted parameter counts, although it is redundant for
this output.}
The expectation value $\expval{Z_r}_{\psi_{\text{full}}}$ is the probability of obtaining
$0$ minus the probability of obtaining $1$, and thus provides a real-valued feature between
$-1$ and $+1$. We convert it into the classification logit through
\begin{equation}
  \ell = a\cdot\expval{Z_r}_{\psi_{\text{full}}} + b,
  \quad a,b\in\RR,
  \label{eq:qr_logit}
\end{equation}
where the classical parameters $a$ and $b$ are learned jointly with the circuit. They set
the scale and offset of the score before the sigmoid maps it to a probability for the binary
cross-entropy loss. In this way, the quantum circuit learns a single readout feature, and
the classical post-processing learns how to map that feature to the classification decision.

With one three-particle-product ancilla and one readout qubit, the complete register contains
$N+2$ qubits. The readout introduces $2(N+1)$ coupling angles, one ZZ and one YY angle for
each data or ancilla qubit, together with the two local rotation angles and the two classical
parameters $a,b$. Its number of coupling parameters therefore grows linearly with the
register size, while the classical observable readout assigns weights to all qubit pairs.
On hardware, this construction uses one measurement setting on a single qubit. Repeated
circuit executions are still needed to estimate its expectation value, whereas in the exact
state-vector simulations used here, that expectation value is evaluated directly.

\section{Experimental Setup}
\label{sec:experimental_setup}

\subsection{Data and Experimental Design}
\label{sec:data}
\label{sec:experiment_design}

We use the Top Tagging dataset~\cite{kasieczka_toptag,toptag_dataset}, a
standard benchmark for jet classification.
Each jet contains up to 200 hadronic constituents with four-momenta $(E,p_x,p_y,p_z)$,
and the task is binary classification of top-quark jets ($t\to Wb\to qqb$) against the QCD
dijet background.
We retain the top-$N$ highest-$p_T$ constituents per jet, normalised by an energy scale
$E_{\text{scale}}=500$\,GeV, and embed each constituent as a grade-1 multivector
$\mv{p}=p^\mu e_\mu$.
Training, validation, and test sets contain 12{,}000, 3{,}000, and 100{,}000 jets respectively,
randomly sampled from a pool of rows in each public split, without explicit class balancing.
We retain the first $N$ constituents in the ordering supplied with the dataset and construct
all active-grade features from their normalised momenta.

Using this common dataset, we study how three architectural choices affect performance.
The first is the ancilla content, either no ancilla or the three-particle-product ancilla. The second is the
grade content of the encoding, either grades $0$ to $3$ or grades $0$ to $4$. The third is the
number of jet constituents $N\in\{8,10\}$, one per data qubit.
Every configuration uses the per-blade grade-aware encoding of the active Clifford grades, the
grade-wise Clifford-coupled $\mathrm{IsingZZ}$ attention, the trainable variational activation, and the
in-circuit $(Z,Y)$ quantum readout.
The training set is fixed at 12K jets and the qubit count is varied to study how performance
scales with circuit width.
We also report one targeted study at a fixed configuration, in which we replace the in-circuit
readout with a classical Hamiltonian-observable head.

To place these comparisons in context, we also train two classical baselines, both built
from the Lorentz-equivariant
\lgatr{} architecture that QGrAT is derived from, and both trained on the same 12K jet split,
the same random seeds, and the same $200$ epochs as the quantum models. The first is
\lgatr{}-Slim, which we deliberately size to QGrAT's parameter budget so the comparison is at
equal capacity and isolates the modelling approach rather than the parameter count. It uses a
single equivariant block with a single attention head, two hidden multivector channels and two
hidden scalar channels, one output scalar channel, and a small multi-layer perceptron head of
hidden width two that maps the pooled Lorentz-invariant features to the logit. This gives $135$
parameters in the equivariant core and $9$ in the head, $144$ in total, matching the QGrAT model
at $N{=}10$. The second is the full \lgatr{} at its minimal single-block configuration with one
hidden multivector channel and one hidden scalar channel, which comes to $1{,}215$ parameters
and serves as a classical capacity reference rather than a parameter-matched comparison.
All quantum models are trained with early stopping, and results are reported as multi-seed mean
and standard deviation.
The complete set of models, qubit counts, and test metrics is given in
Table~\ref{tab:results}.

\subsection{Training and Optimisation}
\label{sec:training}

For the model comparisons described above, we use a common optimisation protocol.
Models are trained with the Adam optimiser using separate learning rates for quantum and
classical parameters. Quantum parameters use $\eta_{\rm Q}=10^{-3}$ (parameter-shift on
hardware, backpropagation on the classical simulator) and classical parameters use $\eta_{\rm C}=5\times10^{-3}$.
A cosine annealing schedule uses a period of $2T$ epochs and a shared minimum
learning rate of $5\times10^{-5}$, where $T$ is the maximum epoch count.
The batch size is $32$. Gradient norms are clipped separately at $10.0$ for the
quantum group and $1.0$ for the classical group. Early stopping uses a minimum
validation-AUC improvement of $0.0005$ over $25$ epochs.
Attention and readout rotations belong to the quantum optimiser group.
Encoder gains, tap scales, VQC parameters and output/head parameters belong
to the classical group, although the VQC parameters also control quantum gates.
The maximum epoch count is $200$ for both the QGrAT variants and the \lgatr{} baselines. With
the early-stopping rule above, most runs end before this limit, so the per-seed training
histories vary in length.

Quantum simulation uses PennyLane for exact state-vector simulation, with gradients
obtained by backpropagation.
All experiments in the present version of this work are conducted on classical simulators, and
hardware experiments on IBM Heron are deferred to future work.

\subsection{Evaluation and Diagnostics}
\label{sec:metrics}
\label{sec:uncertainty}
\label{sec:diagnostics}

To assess classification performance, we report the binary cross-entropy loss and the
area under the receiver operating characteristic curve (AUC). We also report classification
accuracy, precision, recall and the F1 score at the standard threshold $0.5$.
For comparison with the high-energy physics literature, we additionally report the
background rejection $1/\epsilon_B$ at signal efficiencies $\epsilon_S\in\{0.30,0.50\}$,
which is a standard metric for jet taggers~\cite{kasieczka_toptag}.
Published classical baselines on this benchmark achieve
$1/\epsilon_B|_{\epsilon_S=0.30}\approx 1300\text{--}2500$~\cite{lorentznet,pelican,lgatr}.

To judge the stability of these performance estimates, we distinguish two sources of
uncertainty. The first is the spread across random training seeds,
which we report as the standard deviation over seeds. The second comes from the finite size of
the test set. The metrics are measured on $100{,}000$ jets, and a different draw of jets would
give slightly different numbers. We estimate this second term by resampling the saved per-jet
scores with replacement and recomputing each metric many times, then taking the central $95\%$
of the resampled values as an interval.

Alongside the final classification metrics, we examine how the models learn during
training. For each epoch, we record the learning rate, gradient $\ell_2$-norm and pooled
gradient-element variance separately for the quantum and classical parameter groups,
$\theta_{\rm Q}$ and $\theta_{\rm C}$. We also track the change in validation AUC over the
most recent ten epochs. Together, these quantities describe the optimisation trajectory
and the gradient signals available to update the model.

For a parameter group $\theta$ with loss $L$, we quantify the magnitude of this update
signal through the gradient $\ell_2$-norm,
\begin{equation}
\lVert\nabla_\theta L\rVert_2 = \big(\sum_k (\partial L/\partial\theta_k)^2\big)^{1/2}.
\end{equation}
It measures the magnitude of the update signal available to the optimiser. A norm that
collapses towards zero indicates a flat loss landscape in which no parameter receives a
usable gradient, whereas a stable, non-vanishing norm indicates that the optimiser can make
progress. We report the norm separately for the quantum and classical groups because the
two are trained with different learning rates and can exhibit different landscapes.
The reported norm is the post-clipping group norm from the final batch of the epoch.

To describe the variation in these update signals, we compute the sample variance of the
gradient elements in each parameter group after
clipping, pooling all parameter coordinates and all mini-batches within an epoch.
This statistic combines variation across parameter coordinates and training batches.
As a reference we use the $2^{-N}$ scale
from the quantum-circuit trainability literature~\cite{mcclean_barren}, at which the gradient
variance of generic deep circuits becomes exponentially small in the qubit count. We track the
gradient norm and this variance during training and report where they stand relative to the
$2^{-N}$ scale over the simulated range.
The measured training dynamics and gradient diagnostics are presented in
Section~\ref{sec:results_training}.

\section{Results}
\label{sec:results}

\subsection{Test Metrics}
\label{sec:results_main}

Table~\ref{tab:results} reports test metrics for the \qgat{} model family and the classical
\lgatr{} baseline on the Top Tagging benchmark (12K training jets, 100k test jets, mean\,$\pm$\,std over
random seeds). The grade-aware attention processes all five Clifford grades ($0$-$4$) of each
token. The table spans the full ladder over the ancilla content (no ancilla and
three-particle-product ancilla), grade content (grades $0$-$3$ and $0$-$4$),
and qubit count ($N{=}8,10$), all with the variational block and the in-circuit quantum
readout. The three-particle-product ancilla at grades $0$-$3$ gives the strongest rejection,
the grade-4 magnitude feature ($0$-$4$) gives slightly lower rejection, as detailed in the comparison of model variants in
Section~\ref{sec:systematic_experiments}.

\begin{table}[t]
\centering
\caption{%
  Test metrics on the Top Tagging benchmark (12K training jets, 100k test jets). Values are
  mean\,$\pm$\,std over random seeds. The two constituent counts $N{=}8$ and $N{=}10$ are
  shown side by side for each configuration.
  AUC: test ROC-AUC. $1/\epsilon_B^{50}$, $1/\epsilon_B^{30}$: background rejection at
  signal efficiency $\varepsilon_S{=}0.50$ and $0.30$ (higher is better).
  \textbf{Bold}: best \qgat{} configuration.
  Each $N$ block carries its own trainable parameter count $P$. The \qgat{} counts grow
  with $N$ because one data qubit is added per constituent, whereas the two \lgatr{} baselines
  are $N$-independent. At $N{=}10$ the strongest \qgat{} configuration and the slim \lgatr{}
  baseline are compared at an identical budget of $144$ parameters.
  All \qgat{} rows use grade-aware encoding of the active Clifford grades per blade, a
  variational block, and the in-circuit quantum readout, and differ only in ancilla content
  and grade content.
}
\label{tab:results}
\renewcommand{\arraystretch}{1.3}
\setlength{\tabcolsep}{5pt}
\resizebox{\textwidth}{!}{%
\begin{tabular}{l rrrr rrrr}
\toprule
 & \multicolumn{4}{c}{$N=8$} & \multicolumn{4}{c}{$N=10$} \\
\cmidrule(lr){2-5}\cmidrule(lr){6-9}
\textbf{Grades} & $P$ & \textbf{AUC} & $1/\epsilon_B^{50}$ & $1/\epsilon_B^{30}$
                & $P$ & \textbf{AUC} & $1/\epsilon_B^{50}$ & $1/\epsilon_B^{30}$ \\
\midrule
\multicolumn{9}{@{}l}{\emph{No ancilla}}\\
\quad grades 0-3 & $115$ & $0.9230{\pm}0.0005$ & $18.8{\pm}0.2$ & $36.2{\pm}0.6$
                  & $139$ & $0.9296{\pm}0.0006$ & $19.9{\pm}0.3$ & $36.4{\pm}0.6$ \\
\quad grades 0-4 & $132$ & $0.9204{\pm}0.0021$ & $18.0{\pm}0.8$ & $33.4{\pm}2.4$
                  & $160$ & $0.9294{\pm}0.0010$ & $19.8{\pm}0.3$ & $36.5{\pm}1.1$ \\
\addlinespace[2pt]
\multicolumn{9}{@{}l}{\emph{Three-particle-product ancilla}}\\
\quad grades 0-3 & $120$ & $0.9353{\pm}0.0013$ & $30.0{\pm}1.2$ & $90.4{\pm}8.5$
                  & $144$ & $\mathbf{0.9439{\pm}0.0017}$ & $\mathbf{35.7{\pm}2.4}$ & $\mathbf{96.8{\pm}7.7}$ \\
\quad grades 0-4 & $137$ & $0.9356{\pm}0.0003$ & $30.2{\pm}0.5$ & $90.4{\pm}9.7$
                  & $165$ & $0.9424{\pm}0.0012$ & $33.5{\pm}1.6$ & $89.6{\pm}5.7$ \\
\midrule
\multicolumn{9}{@{}l}{\emph{Classical baseline}}\\
\quad \lgatr{}-Slim & $144$ & $0.9403{\pm}0.0037$ & $36.4{\pm}6.4$ & $109.7{\pm}36.6$
                  & $144$ & $0.9404{\pm}0.0098$ & $30.4{\pm}13.9$ & $75.1{\pm}53.8$ \\
\quad \lgatr{} & $1{,}215$ & $0.9465{\pm}0.0009$ & $48.2{\pm}4.4$ & $149.9{\pm}15.3$
                  & $1{,}215$ & $0.9579{\pm}0.0012$ & $72.7{\pm}3.3$ & $192.0{\pm}9.6$ \\
\bottomrule
\end{tabular}%
}
\end{table}

At $N{=}10$ the strongest configuration is the three-particle-product ancilla at grades
$0$-$3$, which reaches test AUC $0.9439{\pm}0.0017$ with $144$ trainable parameters and
background rejection $1/\varepsilon_B^{30}=96.8$ and
$1/\varepsilon_B^{50}=35.7$.
Figure~\ref{fig:ladder_metrics} summarises the test metrics across the full configuration
ladder at $N{=}10$.
The per-axis controlled comparison, how the ancilla content, the grade content, and the
constituent count each move these metrics, is examined in
Section~\ref{sec:systematic_experiments}.
Table~\ref{tab:results} includes two classical \lgatr{} baselines for reference. At an exactly
matched parameter budget the slim \lgatr{} ($144$ parameters) reaches AUC $0.9404{\pm}0.0098$ at
$N{=}10$, which sits just below the strongest QGrAT configuration ($0.9439{\pm}0.0017$,
$144$ parameters). At equal size QGrAT is therefore at least as accurate on AUC and ahead on
both rejection metrics, reaching $1/\varepsilon_B^{50}=35.7$ against $30.4$ and
$1/\varepsilon_B^{30}=96.8$ against $75.1$. The full \lgatr{} at its minimal single-block
configuration ($1{,}215$ parameters) is clearly ahead, reaching AUC $0.9579{\pm}0.0012$ with
roughly $8\times$ the parameters, which sets the classical capacity reference rather than a
matched comparison.

A clear difference between QGrAT and the matched slim \lgatr{} is the stability of the
results across seeds. At $N{=}10$ QGrAT holds its background rejection to
$1/\varepsilon_B^{50}=35.7{\pm}2.4$ and $1/\varepsilon_B^{30}=96.8{\pm}7.7$, whereas the
slim \lgatr{} swings much more widely at $30.4{\pm}13.9$ and $75.1{\pm}53.8$. The AUC is
similarly steadier for QGrAT ($\pm0.0017$ versus $\pm0.0098$). We read this as a regularising
effect of the symmetry-informed circuit, whose attention angles are fixed Clifford
inner products rather than free weights, leaving only a small variational block and the
ancilla gains to train. At a matched parameter count QGrAT is therefore both slightly more
accurate than the slim \lgatr{} and markedly more reproducible on the rejection metrics that
matter most for tagging.

At $100{,}000$ test jets the finite-sample term is small. The test set is class balanced, so
about $50{,}000$ jets are background, and several hundred survive even the tightest cut reported
here. For the strongest model at $N{=}10$ the finite-sample $95\%$ interval is about $\pm0.001$
on the AUC, $\pm2$ on the rejection at $50\%$ efficiency, and $\pm8$ on the rejection at $30\%$
efficiency, comparable to the seed spread. Table~\ref{tab:uncertainty} reports both terms side
by side for the headline configurations. For the matched classical \lgatr{}-Slim the across-seed
spread in the rejection metrics is far larger than the finite-sample interval.

\begin{table}[H]
\centering
\caption{%
  Finite-sample uncertainty against seed scatter at $N{=}10$ on the $100{,}000$-jet test set.
  For each headline configuration the table gives the seed mean with its across-seed standard
  deviation (mean${\pm}$sd) and, separately, the median finite-sample $95\%$ bootstrap
  half-width ($\mathrm{CI_{fs}}$). The two rejections are $1/\epsilon_B$ at signal efficiencies
  $0.50$ and $0.30$.
}
\label{tab:uncertainty}
\renewcommand{\arraystretch}{1.15}
\setlength{\tabcolsep}{5pt}
\resizebox{0.85\columnwidth}{!}{%
\begin{tabular}{l rr rr rr}
\toprule
 & \multicolumn{2}{c}{\textbf{AUC}} & \multicolumn{2}{c}{$1/\epsilon_B^{50}$}
 & \multicolumn{2}{c}{$1/\epsilon_B^{30}$}\\
\cmidrule(lr){2-3}\cmidrule(lr){4-5}\cmidrule(lr){6-7}
\textbf{Configuration} & mean${\pm}$sd & $\mathrm{CI_{fs}}$ & mean${\pm}$sd & $\mathrm{CI_{fs}}$
                       & mean${\pm}$sd & $\mathrm{CI_{fs}}$\\
\midrule
Three-particle, grades 0-3 & $0.9439{\pm}0.0017$ & ${\pm}0.0014$ & $35.7{\pm}2.4$ & ${\pm}1.9$ & $96.8{\pm}7.7$ & ${\pm}8.3$\\
Three-particle, grades 0-4 & $0.9424{\pm}0.0012$ & ${\pm}0.0014$ & $33.5{\pm}1.6$ & ${\pm}1.8$ & $89.6{\pm}5.7$ & ${\pm}7.9$\\
No ancilla, grades 0-3     & $0.9296{\pm}0.0006$ & ${\pm}0.0017$ & $19.9{\pm}0.3$ & ${\pm}0.8$ & $36.4{\pm}0.6$ & ${\pm}2.0$\\
\lgatr{}-Slim ($144$)      & $0.9404{\pm}0.0098$ & ${\pm}0.0015$ & $30.4{\pm}13.9$ & ${\pm}1.1$ & $75.1{\pm}53.8$ & ${\pm}3.3$\\
\lgatr{} ($1{,}215$)       & $0.9579{\pm}0.0012$ & ${\pm}0.0011$ & $72.7{\pm}3.3$ & ${\pm}5.7$ & $192.0{\pm}9.6$ & ${\pm}25.1$\\
\bottomrule
\end{tabular}%
}
\end{table}

A further study at $N{=}10$ asks whether the readout has to be quantum at all.
Table~\ref{tab:hobs} replaces the in-circuit quantum readout with a classical
Hamiltonian head that learns a linear function of the pairwise Pauli-correlator feature vector of
the same variational circuit. The quantum attention circuit and the variational block are unchanged
and only the readout is classical. For the three-particle-product ancilla the multi-seed
classical runs are complete. At grades 0-3 the classical head is slightly ahead of the
quantum readout on all three metrics, and at grades 0-4 the two are comparable within seed
scatter. Either way the classical and quantum readouts reach the same level of performance,
which indicates that the discriminating information is carried by the quantum attention
circuit and can be extracted by either head.

\begin{table}[H]
\centering
\caption{%
  Readout-layer study at $N{=}10$. The in-circuit quantum readout versus a classical
  Hamiltonian head that learns a linear function of the pairwise Pauli-correlator feature
  vector of the same variational circuit. The quantum attention circuit and variational block are
  identical across the two and only the extraction differs. The baseline columns repeat the
  $N{=}10$ quantum-readout values of Table~\ref{tab:results}. Values are mean\,$\pm$\,std over
  random seeds.
}
\label{tab:hobs}
\renewcommand{\arraystretch}{1.15}
\setlength{\tabcolsep}{5pt}
\resizebox{0.85\textwidth}{!}{%
\begin{tabular}{l rrr rrr}
\toprule
 & \multicolumn{3}{c}{Quantum readout} & \multicolumn{3}{c}{Classical Hamiltonian readout} \\
\cmidrule(lr){2-4}\cmidrule(lr){5-7}
\textbf{Configuration} & \textbf{AUC} & $1/\epsilon_B^{50}$ & $1/\epsilon_B^{30}$
                       & \textbf{AUC} & $1/\epsilon_B^{50}$ & $1/\epsilon_B^{30}$ \\
\midrule
\multicolumn{7}{@{}l}{\emph{No ancilla}}\\
\quad grades 0-3 & $0.9296{\pm}0.0006$ & $19.9{\pm}0.3$ & $36.4{\pm}0.6$
                  & $0.9269{\pm}0.0019$ & $18.5{\pm}0.8$ & $32.3{\pm}2.6$ \\
\quad grades 0-4 & $0.9294{\pm}0.0010$ & $19.8{\pm}0.3$ & $36.5{\pm}1.1$
                  & $0.9286{\pm}0.0016$ & $19.1{\pm}0.8$ & $35.3{\pm}1.7$ \\
\addlinespace[2pt]
\multicolumn{7}{@{}l}{\emph{Three-particle-product ancilla}}\\
\quad grades 0-3 & $0.9439{\pm}0.0017$ & $35.7{\pm}2.4$ & $96.8{\pm}7.7$
                  & $0.9475{\pm}0.0014$ & $42.7{\pm}2.7$ & $109.4{\pm}8.7$ \\
\quad grades 0-4 & $0.9424{\pm}0.0012$ & $33.5{\pm}1.6$ & $89.6{\pm}5.7$
                  & $0.9449{\pm}0.0016$ & $35.9{\pm}3.3$ & $92.7{\pm}8.9$ \\
\bottomrule
\end{tabular}%
}
\end{table}

\begin{figure}[t]
\centering
\includegraphics[width=0.95\linewidth]{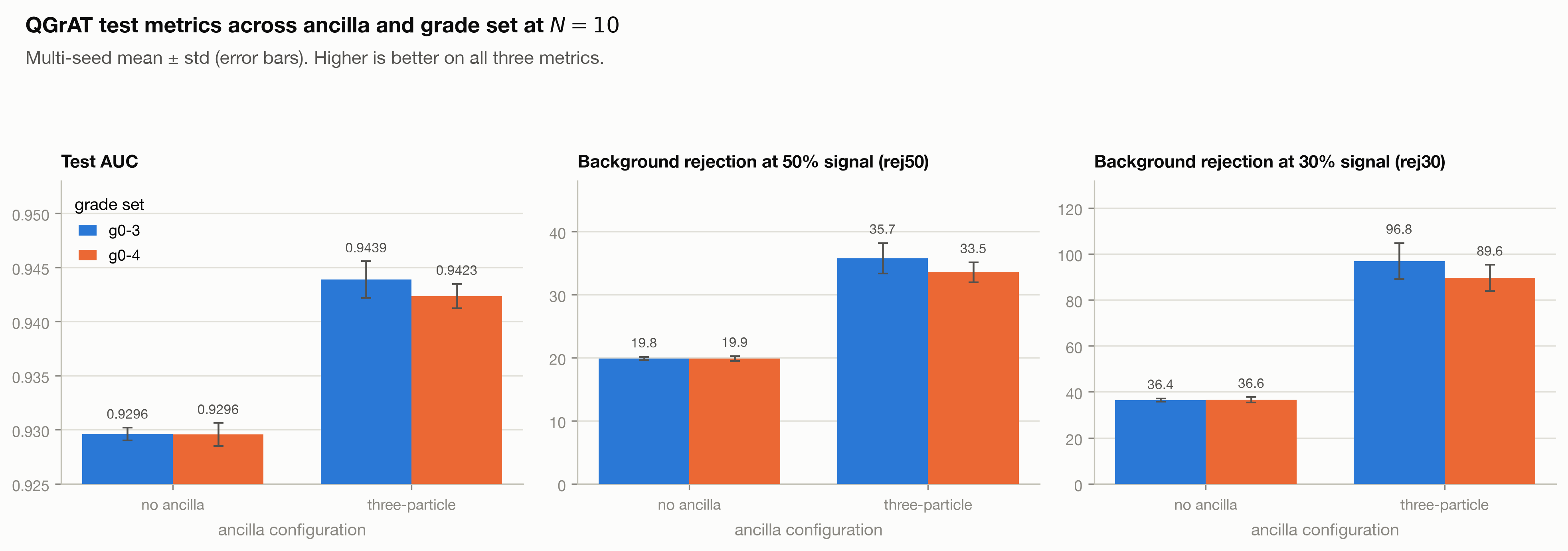}\\[0.8ex]
\includegraphics[width=0.95\linewidth]{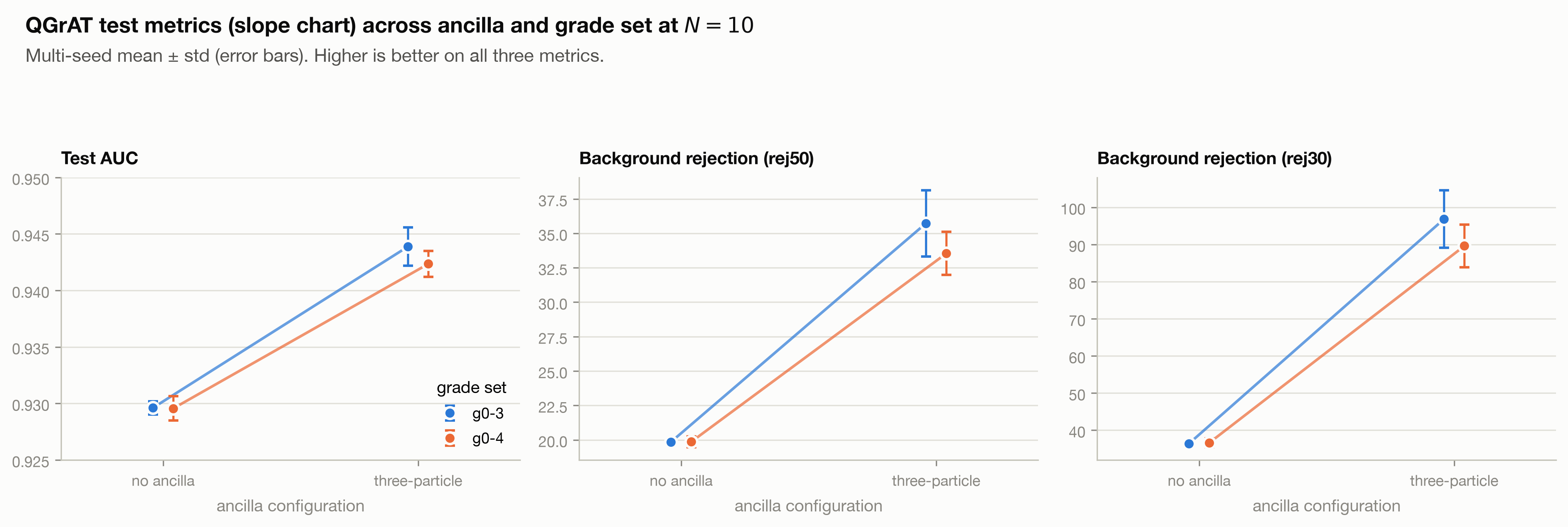}
\caption{Ancilla configuration versus grade set at $N=10$, 12K training jets, 100k test jets, the same
multi-seed data shown two ways. Top: grouped bars of test AUC and background rejection at
signal efficiencies $0.50$ and $0.30$, mean with error bars for the standard deviation, for the
ancilla configurations (no ancilla and three-particle-product) at grades $0$-$3$ and $0$-$4$. Bottom: the same values as grade-set
slopes across the ancilla configurations.}
\label{fig:ladder_metrics}
\end{figure}

\subsection{Training}
\label{sec:results_training}

Figure~\ref{fig:training_curves} shows the validation AUC and BCE loss as a function of
training epoch at $N=10$, for the three ancilla configurations at grades $0$-$3$ and $0$-$4$.
\qgat{} begins near random performance at the first epoch, then converges over the first
${\sim}25$ epochs as the three-particle-product ancilla and readout qubit align. The initial
near-random performance reflects the Hadamard initialisation of the ancilla and readout
qubits, as discussed in Sections~\ref{sec:phase_ancilla} and~\ref{sec:qgatqr_readout}.
The effect of the ancilla is concentrated in the background rejection. The same frontend and
readout without it reach AUC ${\sim}0.930$, whereas the three-particle-product ancilla raises
the AUC to $0.944$ and nearly doubles the high-purity rejection (Table~\ref{tab:results}).

We next examine the gradient diagnostics defined in Section~\ref{sec:diagnostics}.
At $N=10$ the quantum gradient variance stays above the $2^{-10}$ reference
scale\footnote{At $N=10$ the total circuit width is $N+2=12$ qubits including the tap ancilla and readout qubit.
We quote the scale in the conventional form $2^{-N}$ in the data-qubit count $N$.} throughout
training and does not collapse as training proceeds.
Across the qubit counts we simulate (Figure~\ref{fig:scaling_diagnostics}), the
gradient variance stays above the $2^{-N}$ scale at every count tested.
The main qualitative change with increasing $N$ is a slower convergence, as the in-circuit
readout must jointly optimise with a progressively larger data register.

\begin{figure}[tp]
\centering
\includegraphics[width=0.8\linewidth]{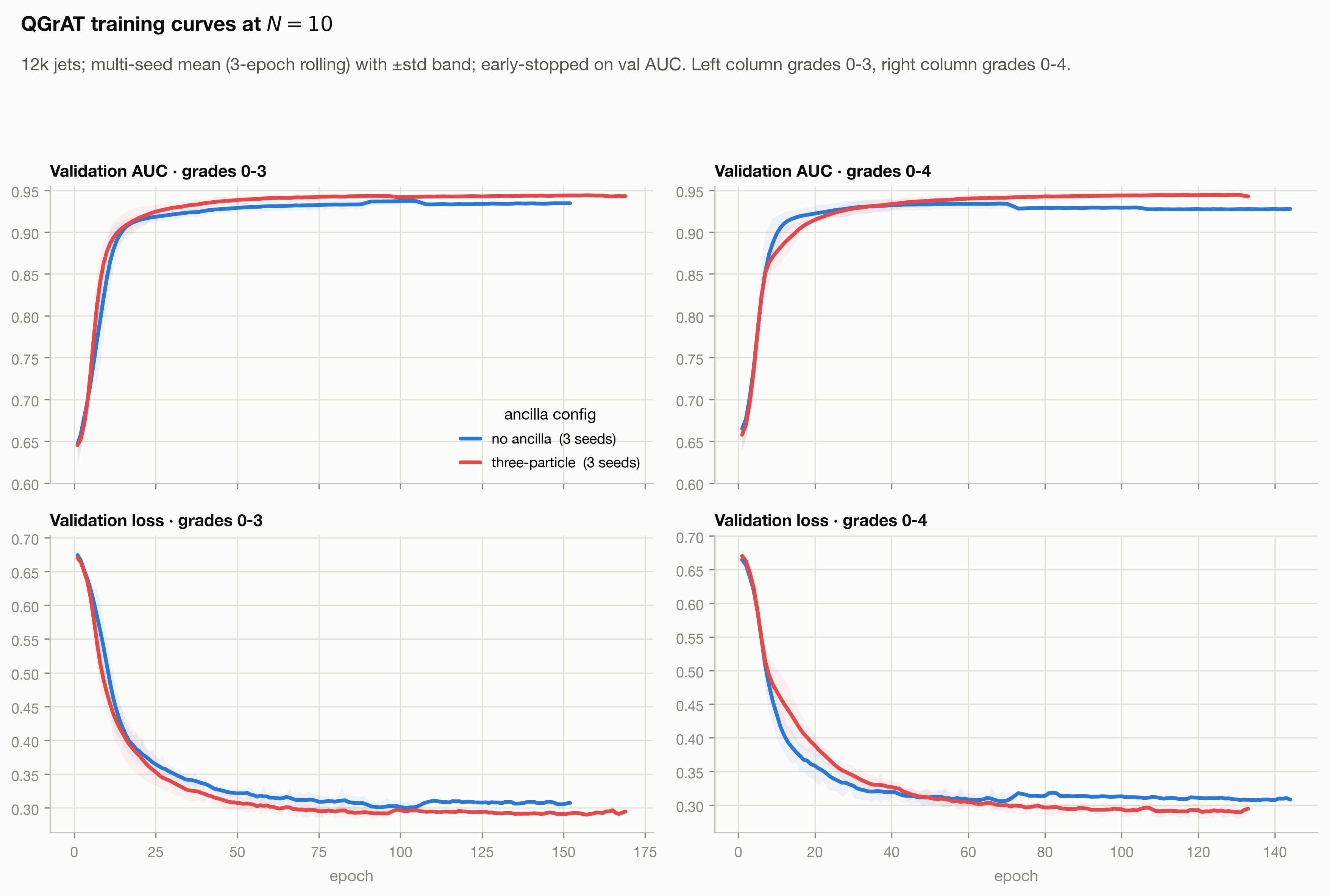}
\caption{Validation AUC (left) and validation loss (right) versus training epoch at $N=10$,
12K training jets, for representative QGrAT configurations (no ancilla and three-particle-product
ancilla, at grades $0$-$3$ and $0$-$4$). Curves
show the multi-seed mean with a shaded band for the standard deviation. Convergence occurs
within roughly $25$ epochs once the ancilla and readout qubit align.}
\label{fig:training_curves}

\vspace{1.5ex}
\includegraphics[width=0.8\linewidth]{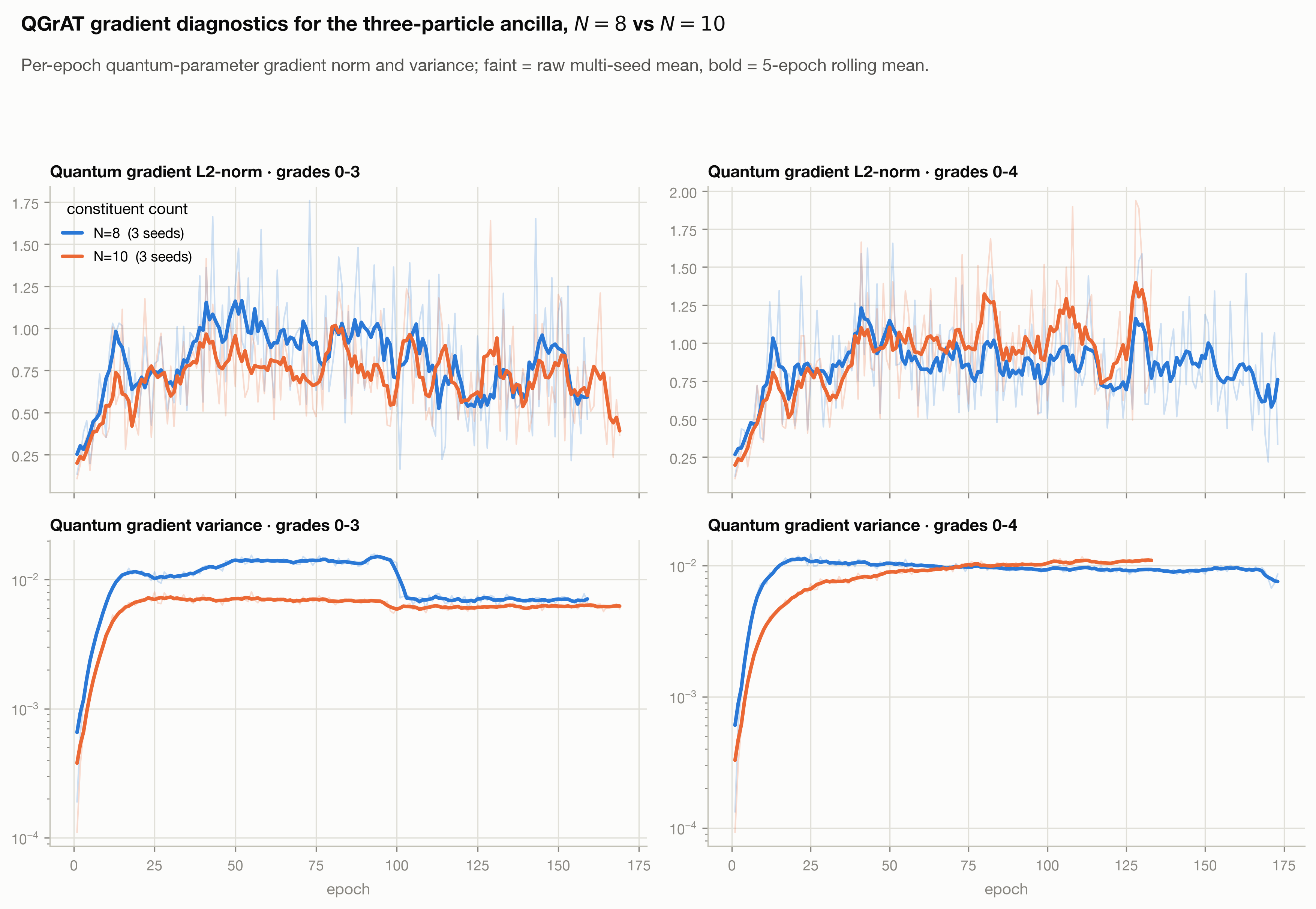}
\caption{Quantum gradient diagnostics versus training epoch for the three-particle-product
ancilla at grades $0$-$3$, comparing $N=8$ and $N=10$. The left panel shows the quantum
gradient $\ell_2$-norm and the right panel the quantum gradient variance on a logarithmic scale,
plotted as multi-seed mean with a shaded band for the standard deviation. The gradient variance
stays above the $2^{-N}$ reference~\cite{mcclean_barren} at both qubit counts over the
simulated range. These observations do not claim the absence of barren plateaus or
favourable asymptotic gradient scaling.}
\label{fig:scaling_diagnostics}
\end{figure}

\section{Systematic Comparison of Model Variants}
\label{sec:systematic_experiments}

Having established the overall tagging performance, we now examine which modelling choices
are most useful. We first study the three-particle-product ancilla and test the role of its
coupling. We then vary the geometric features presented to the circuit and the number of
retained constituents. These comparisons follow the experimental design of
Section~\ref{sec:experiment_design}: models share the grade-aware encoding, variational block
and in-circuit quantum readout, while we vary the selected component and hold the other
configuration choices fixed. The separate comparison of quantum and classical readouts is
reported in Section~\ref{sec:results}. Throughout this section, we interpret the mean test
metrics together with their variation across training seeds.

\subsection{Ancilla and Three-Particle Coupling}
\label{sec:exp_ancilla}
\label{sec:exp_controls}

We begin by asking whether the circuit benefits from an explicit interaction involving
three constituents, once pairwise attention and variational mixing are already present.
At $N{=}10$ with grades $0$ to $3$, adding the three-particle-product ancilla raises the AUC
from $0.9296$ to $0.9439$ (Table~\ref{tab:results} and Fig.~\ref{fig:ladder_metrics}). The
change is especially pronounced in background rejection, rising from $19.9$ to $35.7$
at $50\%$ signal efficiency and from $36.4$ to $96.8$ at $30\%$ signal efficiency. Thus,
for the same fraction of retained top jets, substantially fewer background jets pass the
selection. This is why we examine fixed-efficiency rejection alongside AUC, which summarises
performance over the full range of classification thresholds.

Adding an ancilla also enlarges the register and gives the variational and readout layers
another qubit to interact with. This raises a more targeted question about how much the
explicit three-particle coupling matters when the ancillary qubit is already present.
To address this, we retain the ancilla and its surrounding circuit but switch off its
triple-product coupling. We use grade~1 alone for this control, retaining the pairwise
Minkowski attention in both cases.

For this grade-1 model, disabling the coupling lowers the AUC from $0.9007$ to $0.7337$ and
the background rejection at $50\%$ signal efficiency from $26.9$ to $4.4$
(Table~\ref{tab:grade_ladder}). Retaining the extra qubit and its variational and readout
connections is therefore insufficient to reproduce the performance of the coupled model.
This control demonstrates the importance of the three-particle interaction in the grade-1
configuration. Together with the gain seen for grades $0$ to $3$, it motivates examining
how the ancilla works with the different geometric features. The control does not separately
quantify the contribution of the coupling for every grade combination.

\subsection{Grade Content and Constituent Count}
\label{sec:exp_fullgrade}
\label{sec:exp_scaling}

We can change the information presented to the circuit in two complementary ways, either by
constructing different geometric features from the retained momenta or by retaining more
constituents of the jet. To study the first choice, we initially fix $N{=}10$ and compare
progressively larger sets of Clifford grades. Table~\ref{tab:grade_ladder} gives the results
with and without the three-particle-product ancilla, including the grade-1 control discussed
above. Comparing the two sets of models also lets us ask whether the usefulness of a grade
depends on the presence of the ancilla.

\begin{table}[t]
\centering
\caption{%
  Grade-content ladder at $N{=}10$, 12K training jets, 100k test jets, built one Clifford grade at a time,
  without an ancilla and with the three-particle-product ancilla. All rows use the grade-aware
  encoding of the listed grades, the variational block and the in-circuit quantum readout, and
  differ only in the active grades and the ancilla. The ``grade 1, no three-particle coupling''
  row is the control that keeps the grade-1 ancilla model but switches off only the
  three-particle coupling, leaving the pairwise attention couplings in place.
  Values are mean\,$\pm$\,std over three or more
  random seeds. AUC: test ROC-AUC. $1/\epsilon_B^{50}$,
  $1/\epsilon_B^{30}$: background rejection at signal efficiency $0.50$ and $0.30$.
}
\label{tab:grade_ladder}
\renewcommand{\arraystretch}{1.15}
\setlength{\tabcolsep}{5pt}
\begin{tabular}{l rrr}
\toprule
\textbf{Grades} & \textbf{AUC} & $1/\epsilon_B^{50}$ & $1/\epsilon_B^{30}$ \\
\midrule
\multicolumn{4}{@{}l}{\emph{No ancilla}}\\
\quad grade 0    & $0.9226{\pm}0.0019$ & $18.7{\pm}0.9$ & $34.6{\pm}2.9$ \\
\quad grades 0-1 & $0.9297{\pm}0.0014$ & $19.9{\pm}0.7$ & $37.2{\pm}2.4$ \\
\quad grades 0-2 & $0.9299{\pm}0.0002$ & $20.2{\pm}0.1$ & $37.4{\pm}0.8$ \\
\quad grades 0-3 & $0.9296{\pm}0.0006$ & $19.9{\pm}0.3$ & $36.4{\pm}0.6$ \\
\quad grades 0-4 & $0.9294{\pm}0.0010$ & $19.8{\pm}0.3$ & $36.5{\pm}1.1$ \\
\addlinespace[2pt]
\multicolumn{4}{@{}l}{\emph{Three-particle-product ancilla}}\\
\quad grade 1    & $0.9007{\pm}0.0031$ & $26.9{\pm}0.7$ & $54.1{\pm}0.9$ \\
\quad grade 1, no three-particle coupling & $0.7337{\pm}0.0022$ & $4.4{\pm}0.1$ & $8.0{\pm}0.1$ \\
\quad grades 0-1 & $0.9434{\pm}0.0021$ & $34.1{\pm}2.7$ & $94.2{\pm}14.1$ \\
\quad grades 0-2 & $0.9421{\pm}0.0005$ & $32.9{\pm}1.5$ & $88.2{\pm}7.6$ \\
\quad grades 0-3 & $0.9439{\pm}0.0017$ & $35.7{\pm}2.4$ & $96.8{\pm}7.7$ \\
\quad grades 0-4 & $0.9424{\pm}0.0012$ & $33.5{\pm}1.6$ & $89.6{\pm}5.7$ \\
\bottomrule
\end{tabular}
\end{table}

Without an ancilla, much of the performance is already obtained from the scalar feature
$p_i\cdot P$, which relates each constituent to the total jet momentum. Using this feature
alone gives an AUC of $0.9226$ and background rejection of $18.7$ at $50\%$ signal efficiency.
Adding grade~1 raises these values to $0.9297$ and $19.9$. Including the bivector, trivector
and four-volume-magnitude features then produces only small changes, with the rejection
remaining close to $20$. In this configuration, the richer geometric representation does
not translate into a clear additional gain beyond the scalar and vector features.

With the ancilla present, the scalar feature again makes a substantial contribution.
Adding it to the grade-1 model raises the AUC from $0.9007$ to $0.9434$ and the rejection at
$50\%$ signal efficiency from $26.9$ to $34.1$. Further grades give a less uniform pattern, with
the mean rejection falling to $32.9$ for grades $0$ to $2$, then rising to $35.7$ for grades
$0$ to $3$. The latter configuration has the highest mean rejection in the ladder, but its
improvement over grades $0$ to $1$ is small compared with the reported variation across
seeds. We therefore identify grades $0$ to $3$ as the best-performing configuration by the
mean metrics, while the evidence for a separate benefit from each higher grade remains
limited. The progression is not a monotonic improvement as geometric features are added.

Extending the representation to the grade-4 magnitude feature reinforces this distinction
between the best mean result and a consistent performance gain. For the ancilla model at
$N{=}10$, the rejection at $50\%$ signal efficiency decreases from $35.7$ to $33.5$ when
this feature is added. At $N{=}8$, it changes only slightly in the opposite direction,
from $30.0$ to $30.2$ (Table~\ref{tab:results}). These results do not demonstrate a consistent
benefit from the tested four-volume-magnitude construction. This conclusion concerns that
specific feature, its encoding and its readout. Other ways of using grade-4 geometry remain
to be explored. In particular, the comparison does not establish that all useful geometric
information has been exhausted at grade~3.

Having varied the feature representation at fixed constituent count, we next ask what
happens when more of the jet is retained. Increasing $N$ from $8$ to $10$ supplies two
additional constituent momenta and adds two data qubits (Fig.~\ref{fig:constituent_scaling}).
For the ancilla model with grades $0$ to $3$, the AUC rises from $0.9353$ to $0.9439$ and
the rejection at $50\%$ signal efficiency from $30.0$ to $35.7$. Both quantities improve
across all four ancilla and grade configurations in Table~\ref{tab:results}. At $30\%$
signal efficiency, the changes are less uniform. For example, the ancilla model with grades
$0$ to $4$ changes from $90.4$ to $89.6$, well within its reported seed variation.

These comparisons favour the larger constituent set for AUC and rejection at the $50\%$
working point within the range studied. They also clarify what is being varied, since a larger
constituent set changes both the available kinematic information and the size of the
trainable circuit. The observed gains therefore describe the combined effect of these
changes. Establishing how performance scales beyond $N=10$, and separating the role of
additional constituents from that of circuit capacity, requires further comparisons.

\begin{figure}[H]
\centering
\includegraphics[width=0.95\linewidth]{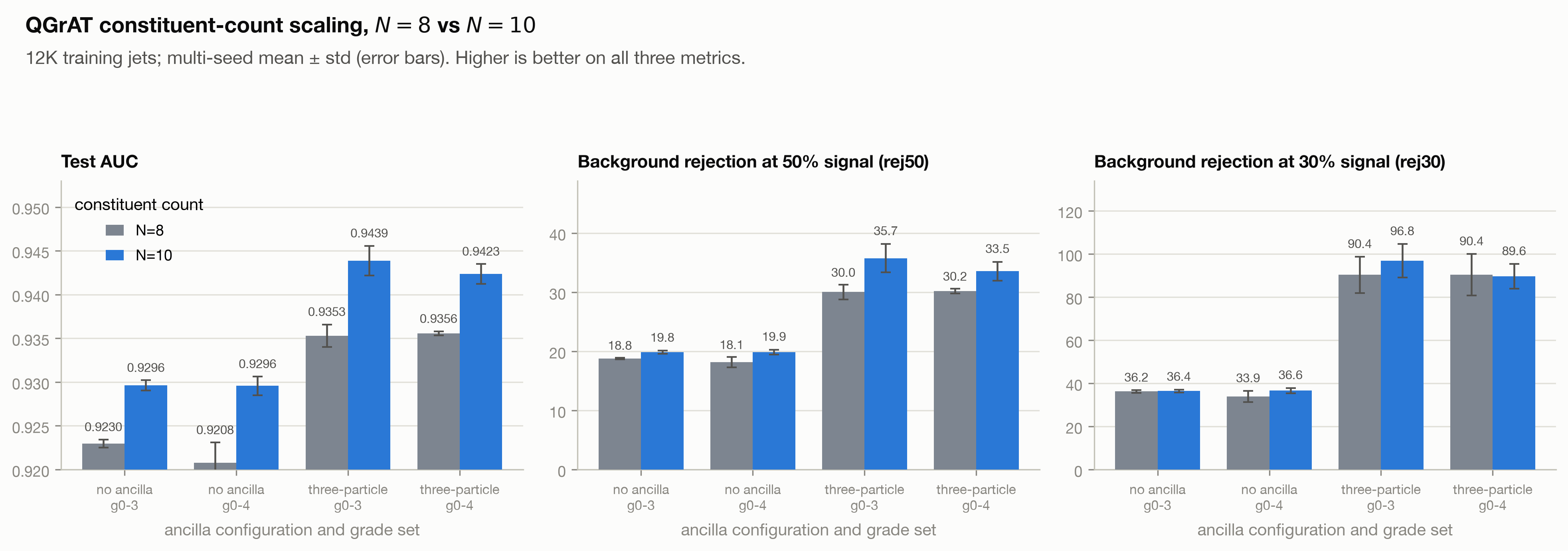}
\caption{Constituent-count scaling from $N=8$ to $N=10$ at 12K training jets, 100k test jets. Panels show test
AUC and background rejection at signal efficiencies $0.50$ and $0.30$, as multi-seed mean with
error bars for the standard deviation, for every ancilla configuration (no ancilla and
three-particle-product) at grades
$0$-$3$ and $0$-$4$. Higher is better on all three metrics.}
\label{fig:constituent_scaling}
\end{figure}

\section{Discussion and Conclusion}
\label{sec:conclusion}
\label{sec:discussion}

\begingroup
\clubpenalties 1 10000\relax
\widowpenalties 1 10000\relax

We have introduced QGrAT (Quantum Grade-Aware Transformer) to investigate how the
geometry of jet constituents can guide the design of a compact quantum classifier.
Its attention mechanism organises multivector features by Clifford grade and uses their
inner products to set the angles of pairwise entangling gates. A dedicated ancilla then
couples to triples of constituents through products of their pairwise Minkowski inner
products. In the Hadamard-test construction, these coefficients determine a coherent
phase response of the joint ancilla-data state, yet the ancilla does not directly measure the
individual products. After trainable mixing, an additional readout qubit gathers the
classification information into one Pauli expectation value, which is converted to a logit
by a learned scale and offset.

This construction brings spacetime geometric algebra into quantum attention, building on
classical geometric algebra networks, symmetry-informed quantum jet models and quantum
approaches to attention and Clifford algebra~\cite{lgatr,suda_neto_lie_eqgnn,comajoan_quantum_vision_jets,quantum_attention_transformer,trindade_clifford_qnn}.
To our knowledge, QGrAT is the first quantum jet tagger whose attention is organised
explicitly by the grades of $\Cl(1,3)$, with pairwise entangling-gate angles determined by
the corresponding grade-specific Clifford inner products. We combine this mechanism with
a coherent three-particle-product ancilla and a trainable in-circuit readout, establish the
symmetry properties of the couplings, and evaluate the contribution of the principal
architectural choices. Together, these developments provide an integrated framework for
translating geometric relations among jet constituents into quantum-circuit interactions
and testing their usefulness for classification.

Within the experimental setting of this paper, the strongest configuration with an
in-circuit readout uses grades $0$ to $3$ and the three-particle-product ancilla. At $N{=}10$
constituents, corresponding to $12$ qubits including the ancilla and readout, it reaches
AUC $0.9439{\pm}0.0017$ with $144$ nominal trainable parameters and $12{,}000$ training jets.
Its background rejection is $35.7{\pm}2.4$ at $50\%$ signal efficiency and $96.8{\pm}7.7$
at $30\%$ (Table~\ref{tab:results}). This is competitive with \lgatr{}-Slim at the same
constituent and parameter budgets. QGrAT has higher mean metrics and smaller reported
seed variation, although the differences in means lie within the larger seed spread of
the classical baseline. The larger \lgatr{} reference, with $1{,}215$ parameters, remains
clearly ahead at AUC $0.9579{\pm}0.0012$. These comparisons demonstrate useful tagging
performance with a small parameter budget, but matching parameter counts alone does not
establish equal model capacity or computational cost.

To understand where this performance comes from, Section~\ref{sec:systematic_experiments}
examines the model components separately. Among the tested changes, the three-particle-product
ancilla gives the clearest gain in background rejection, raising it from $19.9$ to $35.7$
at $50\%$ signal efficiency for grades $0$ to $3$ and $N{=}10$. The grade-1 control, which
retains the extra qubit but switches off its three-particle coupling, shows that the
explicit coupling is essential to the performance of that configuration. This supports
using a correlation motivated by the three-prong top decay to guide the circuit. It does
not, by itself, isolate the benefit of Lorentz symmetry from the additional interaction
or establish which physical information the trained model uses.

Adding geometric features gives a more nuanced picture. Without an ancilla, much of the
performance is already captured by the scalar feature $p_i\cdot P$. With the ancilla,
grades $0$ to $3$ give the highest mean metrics at $N{=}10$, but their gain over grades
$0$ to $1$ is small compared with the seed variation. The tested grade-4 magnitude feature
gives no consistent improvement, so the results do not support a monotonic gain from
adding higher grades. This finding concerns the chosen features and their encoding, but
it does not imply that four-volume information is intrinsically uninformative.
Increasing the constituent count from $8$ to $10$ improves AUC and rejection at $50\%$
signal efficiency across the compared configurations, while the changes at $30\%$ depend
on the grade and ancilla choices. Since this increase supplies both more input momenta
and a larger trainable circuit, the comparison measures their combined effect.

We also find that the useful classification signal can be extracted through either
readout strategy. The classical Hamiltonian-observable head gives higher mean AUC and
rejection for the ancilla model at grades $0$ to $3$, and comparable performance within
seed variation at grades $0$ to $4$ (Table~\ref{tab:hobs}). For grades $0$ to $3$, that
classical-head model uses $284$ parameters compared with $144$ for the in-circuit readout.
The latter therefore offers a more compact way to collect the signal, with readout
couplings growing linearly with the register size and a single measurement setting on
one qubit. The comparison does not establish a performance advantage for quantum readout,
and repeated circuit executions would still be needed to estimate its expectation value
on hardware.

The symmetry analysis clarifies what physical structure is built into this model.
For fixed constituent identities and co-transforming reference vectors, the grade-$0$
through grade-$3$ attention pairings and the triple-product coefficients are Lorentz
invariant. The grade-$4$ magnitude coupling additionally requires retaining the identity
of the selected leading constituent. These statements concern the coupling coefficients.
Encoding individual multivector components as rotation angles introduces frame dependence,
and selecting constituents again after a general Lorentz transformation can change the
inputs themselves (Section~\ref{sec:ising_invariance}). QGrAT thus uses Lorentz-invariant
relations to constrain its interactions while the complete classifier remains frame
dependent. Establishing how this design affects robustness under changes of frame would
require a separate test.

Training succeeds for the circuit sizes studied here, with non-vanishing gradient norms
and a pooled gradient-element variance above the plotted $2^{-N}$ reference
(Fig.~\ref{fig:scaling_diagnostics}). These observations describe the actual optimisation
runs. As specified in Section~\ref{sec:diagnostics}, the variance pools parameter
coordinates and training batches after gradient clipping, and it is not the variance of a
fixed derivative over random circuit initialisations used in standard barren-plateau
analyses~\cite{mcclean_barren}. The diagnostics therefore establish neither favourable
asymptotic scaling nor a well-conditioned optimisation landscape. A small trainable
parameter set and a single variational layer are plausible contributors to the observed
trainability, but the present comparisons do not isolate their effects. Analytical
guarantees for suitable permutation-equivariant circuits~\cite{schatzki_sn} do not
automatically apply to this architecture.

Several features of the evaluation delimit the conclusions. We study $8$ or $10$
constituents per jet from a dataset that stores up to $200$, and use one fixed training-set
size. This supports a comparison at restricted input and parameter budgets, but leaves
performance with larger constituent sets and the dependence on training-sample size open.
In addition, the triple-product coefficients are standardised within each batch, including
at evaluation time (Section~\ref{sec:phase_ancilla}). A jet's prediction consequently
depends on the other jets in its batch. The reported metrics apply to that protocol,
a classifier that assigns each jet a score independently would require a revised
normalisation, such as fixed training-set statistics, followed by retraining and evaluation.

All results are obtained in exact state-vector simulation. The small parameter count
should therefore be distinguished from the resources needed to execute the circuit:
pairwise attention requires $\binom{N}{2}$ entangling gates per active grade, and the
three-particle ancilla adds $\binom{N}{3}$ controlled $ZZZ$ rotations. Their decomposition
into native gates, limited connectivity, measurement statistics and hardware noise will
matter increasingly at larger $N$. No computational quantum advantage or hardware
performance gain is established by the present study. Larger-register simulations and
hardware measurements are needed to determine whether the observed discrimination can
be retained at a practical execution cost.

These findings suggest concrete directions for further development. Alternative
higher-grade features can test whether the circuit can make better use of the geometry
already present in the retained momenta. Comparing ZZ attention with XX or YY couplings
driven by the same invariant angles would test the role of the entangling operator while
preserving the symmetry of its coefficients. The present results establish no preference
among these choices. Comparisons with classical models supplied with the same geometric
and triple-product features would further clarify the contribution of the circuit to the
observed gains~\cite{schreiber_classical_surrogates}. QGrAT provides a concrete starting point for these studies: a compact
quantum tagger with explicitly defined geometric interactions, competitive performance
against the parameter-matched classical reference, and direct numerical evidence that
an explicit three-particle coupling can substantially improve background rejection.

\par
\endgroup

\subsection*{Acknowledgements}

We are grateful to Tilman Plehn for suggesting this project and for his generous
help, insightful advice, and encouragement. His support and thoughtful discussions have
been instrumental to the development of this work. \\
The authors acknowledge support by the state of Baden-Württemberg through bwHPC and the German Research Foundation (DFG) through grant no INST 39/963-1 FUGG (bwForCluster NEMO).

\clearpage

\bibliographystyle{unsrt}
\bibliography{refs}

@article{kasieczka_toptag,
  author        = {Kasieczka, Gregor and Plehn, Tilman and Butter, Anja and others},
  title         = {The Machine Learning landscape of top taggers},
  journal       = {SciPost Phys.},
  volume        = {7},
  pages         = {014},
  year          = {2019},
  doi           = {10.21468/SciPostPhys.7.1.014},
  eprint        = {1902.09914},
  archivePrefix = {arXiv},
  primaryClass  = {hep-ph},
  note          = {\href{https://arxiv.org/abs/1902.09914}{arXiv:1902.09914 [hep-ph]}},
}

@misc{toptag_dataset,
  author        = {Kasieczka, Gregor and Plehn, Tilman and Thompson, Jennifer and Russel, Michael},
  title         = {Top Quark Tagging Reference Dataset},
  howpublished  = {Zenodo},
  year          = {2019},
  doi           = {10.5281/zenodo.2603256},
  note          = {\href{https://doi.org/10.5281/zenodo.2603256}{doi:10.5281/zenodo.2603256}},
}

@article{lorentznet,
  author        = {Gong, Shiqi and Meng, Qi and Zhang, Jue and Qu, Huilin and Li, Congqiao
                   and Qian, Sitian and Du, Weitao and Ma, Zhi-Ming and Liu, Tie-Yan},
  title         = {An efficient {Lorentz} equivariant graph neural network for jet tagging},
  journal       = {JHEP},
  volume        = {07},
  pages         = {030},
  year          = {2022},
  doi           = {10.1007/JHEP07(2022)030},
  eprint        = {2201.08187},
  archivePrefix = {arXiv},
  primaryClass  = {hep-ph},
  note          = {\href{https://arxiv.org/abs/2201.08187}{arXiv:2201.08187 [hep-ph]}},
}

@misc{pelican,
  author        = {Bogatskiy, Alexander and Hoffman, Timothy and Miller, David W.
                   and Offermann, Jan T.},
  title         = {{PELICAN}: Permutation Equivariant and {Lorentz} Invariant or Covariant
                   Aggregator Network for Particle Physics},
  howpublished  = {\href{https://arxiv.org/abs/2211.00454}{arXiv:2211.00454 [hep-ph]}},
  year          = {2022},
  eprint        = {2211.00454},
  archivePrefix = {arXiv},
  primaryClass  = {hep-ph},
}

@inproceedings{particle_transformer,
  author        = {Qu, Huilin and Li, Congqiao and Qian, Sitian},
  title         = {Particle Transformer for Jet Tagging},
  booktitle     = {Proceedings of the 39th International Conference on Machine Learning},
  series        = {Proceedings of Machine Learning Research},
  volume        = {162},
  pages         = {18281--18292},
  year          = {2022},
  eprint        = {2202.03772},
  archivePrefix = {arXiv},
  primaryClass  = {hep-ph},
  note          = {\href{https://arxiv.org/abs/2202.03772}{arXiv:2202.03772 [hep-ph]}},
}

@inproceedings{gatr,
  author        = {Brehmer, Johann and de Haan, Pim and Behrends, S{\"o}nke and Cohen, Taco},
  title         = {Geometric Algebra Transformer},
  booktitle     = {Advances in Neural Information Processing Systems},
  volume        = {36},
  pages         = {35472--35496},
  year          = {2023},
  doi           = {10.52202/075280-1541},
  eprint        = {2305.18415},
  archivePrefix = {arXiv},
  primaryClass  = {cs.LG},
  note          = {\href{https://arxiv.org/abs/2305.18415}{arXiv:2305.18415 [cs.LG]}},
}

@inproceedings{lgatr,
  author        = {Spinner, Jonas and Bres{\'o}, V{\'\i}ctor and de Haan, Pim and Plehn, Tilman
                   and Thaler, Jesse and Brehmer, Johann},
  title         = {{Lorentz}-Equivariant Geometric Algebra Transformers for High-Energy Physics},
  booktitle     = {Advances in Neural Information Processing Systems},
  volume        = {37},
  pages         = {22178--22205},
  year          = {2024},
  doi           = {10.52202/079017-0699},
  eprint        = {2405.14806},
  archivePrefix = {arXiv},
  primaryClass  = {physics.data-an},
  note          = {\href{https://arxiv.org/abs/2405.14806}{arXiv:2405.14806 [physics.data-an]}},
}

@article{larocca_gqml,
  author        = {Larocca, Mart{\'\i}n and Sauvage, Fr{\'e}d{\'e}ric and Sbahi, Faris M.
                   and Verdon, Guillaume and Coles, Patrick J. and Cerezo, M.},
  title         = {Group-Invariant Quantum Machine Learning},
  journal       = {PRX Quantum},
  volume        = {3},
  number        = {3},
  pages         = {030341},
  year          = {2022},
  doi           = {10.1103/PRXQuantum.3.030341},
  eprint        = {2205.02261},
  archivePrefix = {arXiv},
  primaryClass  = {quant-ph},
  note          = {\href{https://arxiv.org/abs/2205.02261}{arXiv:2205.02261 [quant-ph]}},
}

@article{meyer_group_invariant,
  author        = {Meyer, Johannes Jakob and Mularski, Marian and Gil-Fuster, Elies
                   and Mele, Antonio Anna and Arzani, Francesco and Wilms, Alissa and Eisert, Jens},
  title         = {Exploiting Symmetry in Variational Quantum Machine Learning},
  journal       = {PRX Quantum},
  volume        = {4},
  number        = {1},
  pages         = {010328},
  year          = {2023},
  doi           = {10.1103/PRXQuantum.4.010328},
  eprint        = {2205.06217},
  archivePrefix = {arXiv},
  primaryClass  = {quant-ph},
  note          = {\href{https://arxiv.org/abs/2205.06217}{arXiv:2205.06217 [quant-ph]}},
}

@article{nguyen_equivariant,
  author        = {Nguyen, Quynh T. and Schatzki, Louis and Braccia, Paolo and
                   Ragone, Michael and Coles, Patrick J. and Sauvage, Fr{\'e}d{\'e}ric and
                   Larocca, Mart{\'\i}n and Cerezo, M.},
  title         = {Theory for Equivariant Quantum Neural Networks},
  journal       = {PRX Quantum},
  volume        = {5},
  number        = {2},
  pages         = {020328},
  year          = {2024},
  doi           = {10.1103/PRXQuantum.5.020328},
  eprint        = {2210.08566},
  archivePrefix = {arXiv},
  primaryClass  = {quant-ph},
  note          = {\href{https://arxiv.org/abs/2210.08566}{arXiv:2210.08566 [quant-ph]}},
}

@article{schatzki_sn,
  author        = {Schatzki, Louis and Larocca, Mart{\'\i}n and Nguyen, Quynh T.
                   and Sauvage, Fr{\'e}d{\'e}ric and Cerezo, M.},
  title         = {Theoretical guarantees for permutation-equivariant quantum neural networks},
  journal       = {npj Quantum Inf.},
  volume        = {10},
  pages         = {12},
  year          = {2024},
  doi           = {10.1038/s41534-024-00804-1},
  eprint        = {2210.09974},
  archivePrefix = {arXiv},
  primaryClass  = {quant-ph},
  note          = {\href{https://arxiv.org/abs/2210.09974}{arXiv:2210.09974 [quant-ph]}},
}

@article{blance_spannowsky_vqc,
  author        = {Blance, Andrew and Spannowsky, Michael},
  title         = {Quantum Machine Learning for Particle Physics using a Variational Quantum Classifier},
  journal       = {JHEP},
  volume        = {02},
  pages         = {212},
  year          = {2021},
  doi           = {10.1007/JHEP02(2021)212},
  eprint        = {2010.07335},
  archivePrefix = {arXiv},
  primaryClass  = {hep-ph},
  note          = {\href{https://arxiv.org/abs/2010.07335}{arXiv:2010.07335 [hep-ph]}},
}

@article{ngairangbam_quantum_autoencoder,
  author        = {Ngairangbam, Vishal S. and Spannowsky, Michael and Takeuchi, Michihisa},
  title         = {Anomaly detection in high-energy physics using a quantum autoencoder},
  journal       = {Phys. Rev. D},
  volume        = {105},
  number        = {9},
  pages         = {095004},
  year          = {2022},
  doi           = {10.1103/PhysRevD.105.095004},
  eprint        = {2112.04958},
  archivePrefix = {arXiv},
  primaryClass  = {hep-ph},
  note          = {\href{https://arxiv.org/abs/2112.04958}{arXiv:2112.04958 [hep-ph]}},
}

@misc{ngairangbam_geometric_qnn,
  author        = {Ngairangbam, Vishal S. and Spannowsky, Michael},
  title         = {From Reachability to Learnability: Geometric Design Principles for Quantum Neural Networks},
  howpublished  = {\href{https://arxiv.org/abs/2603.03071}{arXiv:2603.03071 [quant-ph]}},
  year          = {2026},
  reportNumber  = {IPPP/26/22},
  eprint        = {2603.03071},
  archivePrefix = {arXiv},
  primaryClass  = {quant-ph},
  doi           = {10.48550/arXiv.2603.03071},
}

@misc{ngairangbam_generic_reachability,
  author        = {Ngairangbam, Vishal S. and Spannowsky, Michael},
  title         = {Equivalence of maximal and generic reachability for non-universal Variational Quantum Circuits},
  howpublished  = {\href{https://arxiv.org/abs/2609.27053}{arXiv:2609.27053 [quant-ph]}},
  year          = {2026},
  eprint        = {2609.27053},
  archivePrefix = {arXiv},
  primaryClass  = {quant-ph},
  doi           = {10.48550/arXiv.2609.27053},
}

@article{abel_completely_quantum_nn,
  author        = {Abel, Steve and Criado, Juan C. and Spannowsky, Michael},
  title         = {Completely Quantum Neural Networks},
  journal       = {Phys. Rev. A},
  volume        = {106},
  number        = {2},
  pages         = {022601},
  year          = {2022},
  doi           = {10.1103/PhysRevA.106.022601},
  eprint        = {2202.11727},
  archivePrefix = {arXiv},
  primaryClass  = {quant-ph},
  note          = {\href{https://arxiv.org/abs/2202.11727}{arXiv:2202.11727 [quant-ph]}},
}

@article{bal_one_particle_one_qubit,
  author        = {Bal, Aritra and Klute, Markus and Maier, Benedikt and Oughton, Melik
                   and Pezone, Eric and Spannowsky, Michael},
  title         = {One particle -- one qubit: Particle physics data encoding for quantum machine learning},
  journal       = {Phys. Rev. D},
  volume        = {112},
  number        = {7},
  pages         = {076004},
  year          = {2025},
  doi           = {10.1103/l8y2-87vq},
  eprint        = {2502.17301},
  archivePrefix = {arXiv},
  primaryClass  = {hep-ph},
  note          = {\href{https://arxiv.org/abs/2502.17301}{arXiv:2502.17301 [hep-ph]}},
}

@article{belis_guided_quantum_compression,
  author        = {Belis, Vasilis and Odagiu, Patrick and Grossi, Michele and Reiter, Florentin
                   and Dissertori, G{\"u}nther and Vallecorsa, Sofia},
  title         = {Guided Quantum Compression for High Dimensional Data Classification},
  journal       = {Mach. Learn.: Sci. Technol.},
  volume        = {5},
  number        = {3},
  pages         = {035010},
  year          = {2024},
  doi           = {10.1088/2632-2153/ad5fdd},
  eprint        = {2402.09524},
  archivePrefix = {arXiv},
  primaryClass  = {quant-ph},
  note          = {\href{https://arxiv.org/abs/2402.09524}{arXiv:2402.09524 [quant-ph]}},
}

@inproceedings{bermot_quantum_gan_hep,
  author        = {Bermot, Elie and Zoufal, Christa and Grossi, Michele and Schuhmacher, Julian
                   and Tacchino, Francesco and Vallecorsa, Sofia and Tavernelli, Ivano},
  title         = {Quantum Generative Adversarial Networks for Anomaly Detection in High Energy Physics},
  booktitle     = {2023 IEEE International Conference on Quantum Computing and Engineering (QCE)},
  publisher     = {IEEE},
  volume        = {1},
  pages         = {331--341},
  year          = {2023},
  doi           = {10.1109/QCE57702.2023.00045},
  eprint        = {2304.14439},
  archivePrefix = {arXiv},
  primaryClass  = {quant-ph},
  note          = {\href{https://arxiv.org/abs/2304.14439}{arXiv:2304.14439 [quant-ph]}},
}

@article{funcke_dimensional_expressivity,
  author        = {Funcke, Lena and Hartung, Tobias and Jansen, Karl and K{\"u}hn, Stefan
                   and Stornati, Paolo},
  title         = {Dimensional Expressivity Analysis of Parametric Quantum Circuits},
  journal       = {Quantum},
  volume        = {5},
  pages         = {422},
  year          = {2021},
  doi           = {10.22331/q-2021-03-29-422},
  eprint        = {2011.03532},
  archivePrefix = {arXiv},
  primaryClass  = {quant-ph},
  note          = {\href{https://arxiv.org/abs/2011.03532}{arXiv:2011.03532 [quant-ph]}},
}

@article{forestano_invariant_equivariant_qgnn,
  author        = {Forestano, Roy T. and Comajoan Cara, Mar{\c c}al and Dahale, Gopal Ramesh
                   and Dong, Zhongtian and Gleyzer, Sergei and Justice, Daniel
                   and Kong, Kyoungchul and Magorsch, Tom and Matchev, Konstantin T.
                   and Matcheva, Katia and Unlu, Eyup B.},
  title         = {A Comparison between Invariant and Equivariant Classical and Quantum Graph Neural Networks},
  journal       = {Axioms},
  volume        = {13},
  number        = {3},
  pages         = {160},
  year          = {2024},
  doi           = {10.3390/axioms13030160},
  eprint        = {2311.18672},
  archivePrefix = {arXiv},
  primaryClass  = {quant-ph},
  note          = {\href{https://arxiv.org/abs/2311.18672}{arXiv:2311.18672 [quant-ph]}},
}

@article{comajoan_quantum_vision_jets,
  author        = {Comajoan Cara, Mar{\c c}al and Dahale, Gopal Ramesh and Dong, Zhongtian
                   and Forestano, Roy T. and Gleyzer, Sergei and Justice, Daniel
                   and Kong, Kyoungchul and Magorsch, Tom and Matchev, Konstantin T.
                   and Matcheva, Katia and Unlu, Eyup B.},
  title         = {Quantum Vision Transformers for Quark--Gluon Classification},
  journal       = {Axioms},
  volume        = {13},
  number        = {5},
  pages         = {323},
  year          = {2024},
  doi           = {10.3390/axioms13050323},
  eprint        = {2405.10284},
  archivePrefix = {arXiv},
  primaryClass  = {quant-ph},
  note          = {\href{https://arxiv.org/abs/2405.10284}{arXiv:2405.10284 [quant-ph]}},
}

@article{wu_quantum_hep,
  author        = {Wu, Sau Lan and Sun, Shaojun and Guan, Wen and others},
  title         = {Application of quantum machine learning using the quantum kernel algorithm
                   on high energy physics analysis at the {LHC}},
  journal       = {Phys. Rev. Res.},
  volume        = {3},
  number        = {3},
  pages         = {033221},
  year          = {2021},
  doi           = {10.1103/PhysRevResearch.3.033221},
  eprint        = {2104.05059},
  archivePrefix = {arXiv},
  primaryClass  = {quant-ph},
  note          = {\href{https://arxiv.org/abs/2104.05059}{arXiv:2104.05059 [quant-ph]}},
}

@inproceedings{suda_neto_lie_eqgnn,
  author        = {Suda Neto, Jogi and Forestano, Roy T. and Gleyzer, Sergei
                   and Kong, Kyoungchul and Matchev, Konstantin T. and Matcheva, Katia},
  title         = {{Lie}-Equivariant Quantum Graph Neural Networks},
  booktitle     = {Machine Learning with New Compute Paradigms (MLNCP) Workshop at NeurIPS 2024},
  year          = {2024},
  eprint        = {2411.15315},
  archivePrefix = {arXiv},
  primaryClass  = {quant-ph},
  note          = {\href{https://arxiv.org/abs/2411.15315}{arXiv:2411.15315 [quant-ph]}},
}

@article{li_enforcing_symmetries,
  author        = {Li, Zhelun and Nagano, Lento and Terashi, Koji},
  title         = {Enforcing exact permutation and rotational symmetries in the application of
                   quantum neural networks on point cloud datasets},
  journal       = {Phys. Rev. Res.},
  volume        = {6},
  number        = {4},
  pages         = {043028},
  year          = {2024},
  doi           = {10.1103/PhysRevResearch.6.043028},
  eprint        = {2405.11150},
  archivePrefix = {arXiv},
  primaryClass  = {quant-ph},
  note          = {\href{https://arxiv.org/abs/2405.11150}{arXiv:2405.11150 [quant-ph]}},
}

@misc{elhag_qcnn_jets,
  author        = {Elhag, Hala and Hartung, Tobias and Jansen, Karl and Nagano, Lento
                   and Pirina, Giorgio Menicagli and Di Tucci, Alice},
  title         = {Quantum convolutional neural networks for jet images classification},
  howpublished  = {\href{https://arxiv.org/abs/2408.08701}{arXiv:2408.08701 [quant-ph]}},
  year          = {2024},
  eprint        = {2408.08701},
  archivePrefix = {arXiv},
  primaryClass  = {quant-ph},
}

@article{lp2b_encoding,
  author        = {Napolitano, Fabrizio and Della Penna, Luca and Tedeschi, Tommaso and Fan{\`o}, Livio},
  title         = {{Lund} Plane to {Bloch} ({LP$^{2}$B}) encoding for object and polarization tagging
                   with quantum jet substructure},
  journal       = {Eur. Phys. J. C},
  volume        = {86},
  number        = {7},
  pages         = {909},
  year          = {2026},
  doi           = {10.1140/epjc/s10052-026-16142-9},
  eprint        = {2604.18613},
  archivePrefix = {arXiv},
  primaryClass  = {quant-ph},
  note          = {\href{https://arxiv.org/abs/2604.18613}{arXiv:2604.18613 [quant-ph]}},
}

@article{trindade_clifford_qnn,
  author        = {Trindade, Marco A. S. and Lula-Rocha, Vin{\'\i}cius N. A. and Floquet, S.},
  title         = {{Clifford} Algebras, Quantum Neural Networks and Generalized Quantum {Fourier} Transform},
  journal       = {Adv. Appl. Clifford Algebras},
  volume        = {33},
  number        = {3},
  pages         = {38},
  year          = {2023},
  doi           = {10.1007/s00006-023-01279-7},
  eprint        = {2206.01808},
  archivePrefix = {arXiv},
  primaryClass  = {quant-ph},
  note          = {\href{https://arxiv.org/abs/2206.01808}{arXiv:2206.01808 [quant-ph]}},
}

@book{doran_lasenby,
  author        = {Doran, Chris and Lasenby, Anthony},
  title         = {Geometric Algebra for Physicists},
  publisher     = {Cambridge University Press},
  year          = {2003},
  doi           = {10.1017/CBO9780511807497},
}

@article{quantum_attention_transformer,
  author        = {Li, Guangxi and Zhao, Xuanqiang and Wang, Xin},
  title         = {Quantum self-attention neural networks for text classification},
  journal       = {Sci. China Inf. Sci.},
  volume        = {67},
  number        = {4},
  pages         = {142501},
  year          = {2024},
  doi           = {10.1007/s11432-023-3879-7},
  eprint        = {2205.05625},
  archivePrefix = {arXiv},
  primaryClass  = {quant-ph},
  note          = {\href{https://arxiv.org/abs/2205.05625}{arXiv:2205.05625 [quant-ph]}},
}

@article{mcclean_barren,
  author        = {McClean, Jarrod R. and Boixo, Sergio and Smelyanskiy, Vadim N.
                   and Babbush, Ryan and Neven, Hartmut},
  title         = {Barren plateaus in quantum neural network training landscapes},
  journal       = {Nat. Commun.},
  volume        = {9},
  pages         = {4812},
  year          = {2018},
  doi           = {10.1038/s41467-018-07090-4},
  eprint        = {1803.11173},
  archivePrefix = {arXiv},
  primaryClass  = {quant-ph},
  note          = {\href{https://arxiv.org/abs/1803.11173}{arXiv:1803.11173 [quant-ph]}},
}

@article{perez_garcia_reuploading,
  author        = {P{\'e}rez-Salinas, Adri{\'a}n and Cervera-Lierta, Alba and Gil-Fuster, Elies
                   and Latorre, Jos{\'e} I.},
  title         = {Data re-uploading for a universal quantum classifier},
  journal       = {Quantum},
  volume        = {4},
  pages         = {226},
  year          = {2020},
  doi           = {10.22331/q-2020-02-06-226},
  eprint        = {1907.02085},
  archivePrefix = {arXiv},
  primaryClass  = {quant-ph},
  note          = {\href{https://arxiv.org/abs/1907.02085}{arXiv:1907.02085 [quant-ph]}},
}

@article{cerezo_vqe_review,
  author        = {Cerezo, M. and Arrasmith, Andrew and Babbush, Ryan
                   and Benjamin, Simon C. and Endo, Suguru and Fujii, Keisuke
                   and McClean, Jarrod R. and Mitarai, Kosuke and Yuan, Xiao
                   and Cincio, Lukasz and Coles, Patrick J.},
  title         = {Variational Quantum Algorithms},
  journal       = {Nat. Rev. Phys.},
  volume        = {3},
  number        = {9},
  pages         = {625--644},
  year          = {2021},
  doi           = {10.1038/s42254-021-00348-9},
  eprint        = {2012.09265},
  archivePrefix = {arXiv},
  primaryClass  = {quant-ph},
  note          = {\href{https://arxiv.org/abs/2012.09265}{arXiv:2012.09265 [quant-ph]}},
}

@book{hestenes_sta,
  author        = {Hestenes, David},
  title         = {Space-Time Algebra},
  publisher     = {Gordon and Breach},
  address       = {New York},
  year          = {1966},
  note          = {2nd edition: Birkh{\"a}user, Cham, 2015},
}

@book{lounesto_clifford,
  author        = {Lounesto, Pertti},
  title         = {Clifford Algebras and Spinors},
  series        = {London Mathematical Society Lecture Note Series},
  volume        = {286},
  publisher     = {Cambridge University Press},
  edition       = {2nd},
  year          = {2001},
  doi           = {10.1017/CBO9780511526022},
}

@article{cleve1998quantum,
  author        = {Cleve, Richard and Ekert, Artur and Macchiavello, Chiara and Mosca, Michele},
  title         = {Quantum algorithms revisited},
  journal       = {Proc. R. Soc. Lond. A},
  volume        = {454},
  number        = {1969},
  pages         = {339--354},
  year          = {1998},
  doi           = {10.1098/rspa.1998.0164},
  eprint        = {quant-ph/9708016},
  archivePrefix = {arXiv},
  note          = {\href{https://arxiv.org/abs/quant-ph/9708016}{arXiv:quant-ph/9708016}},
}

@book{nielsen2010quantum,
  author        = {Nielsen, Michael A. and Chuang, Isaac L.},
  title         = {Quantum Computation and Quantum Information},
  publisher     = {Cambridge University Press},
  edition       = {10th anniversary},
  year          = {2010},
  doi           = {10.1017/CBO9780511976667},
}

@article{ekert2002direct,
  author        = {Ekert, Artur K. and Alves, Carolina Moura and Oi, Daniel K. L.
                   and Horodecki, Micha{\l} and Horodecki, Pawe{\l} and Kwek, L. C.},
  title         = {Direct estimations of linear and nonlinear functionals of a quantum state},
  journal       = {Phys. Rev. Lett.},
  volume        = {88},
  number        = {21},
  pages         = {217901},
  year          = {2002},
  doi           = {10.1103/PhysRevLett.88.217901},
  eprint        = {quant-ph/0203016},
  archivePrefix = {arXiv},
  note          = {\href{https://arxiv.org/abs/quant-ph/0203016}{arXiv:quant-ph/0203016}},
}

@article{schuld_circuit_centric,
  author        = {Schuld, Maria and Bocharov, Alex and Svore, Krysta M. and Wiebe, Nathan},
  title         = {Circuit-centric quantum classifiers},
  journal       = {Phys. Rev. A},
  volume        = {101},
  number        = {3},
  pages         = {032308},
  year          = {2020},
  doi           = {10.1103/PhysRevA.101.032308},
  eprint        = {1804.00633},
  archivePrefix = {arXiv},
  primaryClass  = {quant-ph},
  note          = {\href{https://arxiv.org/abs/1804.00633}{arXiv:1804.00633 [quant-ph]}},
}

@article{havlicek_feature_map,
  author        = {Havl{\'\i}{\v{c}}ek, Vojt{\v{e}}ch and C{\'o}rcoles, Antonio D. and Temme, Kristan
                   and Harrow, Aram W. and Kandala, Abhinav and Chow, Jerry M. and Gambetta, Jay M.},
  title         = {Supervised learning with quantum-enhanced feature spaces},
  journal       = {Nature},
  volume        = {567},
  number        = {7747},
  pages         = {209--212},
  year          = {2019},
  doi           = {10.1038/s41586-019-0980-2},
  eprint        = {1804.11326},
  archivePrefix = {arXiv},
  primaryClass  = {quant-ph},
  note          = {\href{https://arxiv.org/abs/1804.11326}{arXiv:1804.11326 [quant-ph]}},
}

@article{sim_expressibility,
  author        = {Sim, Sukin and Johnson, Peter D. and Aspuru-Guzik, Al{\'a}n},
  title         = {Expressibility and entangling capability of parameterized quantum circuits
                   for hybrid quantum-classical algorithms},
  journal       = {Adv. Quantum Technol.},
  volume        = {2},
  number        = {12},
  pages         = {1900070},
  year          = {2019},
  doi           = {10.1002/qute.201900070},
  eprint        = {1905.10876},
  archivePrefix = {arXiv},
  primaryClass  = {quant-ph},
  note          = {\href{https://arxiv.org/abs/1905.10876}{arXiv:1905.10876 [quant-ph]}},
}

@article{qcgnn_jet,
  author        = {Chen, Yi-An and Chen, Kai-Feng},
  title         = {Jet discrimination with a quantum complete graph neural network},
  journal       = {Phys. Rev. D},
  volume        = {111},
  number        = {1},
  pages         = {016020},
  year          = {2025},
  doi           = {10.1103/PhysRevD.111.016020},
  eprint        = {2403.04990},
  archivePrefix = {arXiv},
  primaryClass  = {hep-ph},
  note          = {\href{https://arxiv.org/abs/2403.04990}{arXiv:2403.04990 [hep-ph]}},
}

@incollection{Spannowsky:2025tvb,
  author        = {Spannowsky, Michael},
  editor        = {Fischer, Christian S.},
  title         = {Searching for New Physics with the {Large Hadron Collider}},
  booktitle     = {Encyclopedia of Particle Physics},
  publisher     = {Elsevier},
  volume        = {1},
  pages         = {182--200},
  year          = {2026},
  doi           = {10.1016/B978-0-443-26598-3.00072-9},
  eprint        = {2510.03704},
  archivePrefix = {arXiv},
  primaryClass  = {hep-ph},
  reportNumber  = {IPPP/25/62},
  note          = {\href{https://arxiv.org/abs/2510.03704}{arXiv:2510.03704 [hep-ph]}},
}

@book{Marzani:2019hun,
  author        = {Marzani, Simone and Soyez, Gregory and Spannowsky, Michael},
  title         = {Looking inside jets: an introduction to jet substructure and boosted-object phenomenology},
  publisher     = {Springer},
  series        = {Lecture Notes in Physics},
  volume        = {1053},
  edition       = {2nd},
  year          = {2026},
  doi           = {10.1007/978-3-032-27190-7},
  eprint        = {1901.10342},
  archivePrefix = {arXiv},
  primaryClass  = {hep-ph},
  note          = {\href{https://arxiv.org/abs/1901.10342}{arXiv:1901.10342 [hep-ph]}},
}

@article{Kogler:2018hem,
  author        = {Kogler, Roman and Nachman, Benjamin and Schmidt, Alexander and others},
  title         = {Jet substructure at the {Large Hadron Collider}},
  journal       = {Rev. Mod. Phys.},
  volume        = {91},
  number        = {4},
  pages         = {045003},
  year          = {2019},
  doi           = {10.1103/RevModPhys.91.045003},
  eprint        = {1803.06991},
  archivePrefix = {arXiv},
  primaryClass  = {hep-ex},
  note          = {\href{https://arxiv.org/abs/1803.06991}{arXiv:1803.06991 [hep-ex]}},
}

@inproceedings{Bogatskiy:2020pex,
  author        = {Bogatskiy, Alexander and Anderson, Brandon and Offermann, Jan T.
                   and Roussi, Marwah and Miller, David W. and Kondor, Risi},
  title         = {{Lorentz} Group Equivariant Neural Network for Particle Physics},
  booktitle     = {Proceedings of the 37th International Conference on Machine Learning},
  series        = {Proceedings of Machine Learning Research},
  volume        = {119},
  pages         = {992--1002},
  year          = {2020},
  eprint        = {2006.04780},
  archivePrefix = {arXiv},
  primaryClass  = {hep-ph},
  note          = {\href{https://arxiv.org/abs/2006.04780}{arXiv:2006.04780 [hep-ph]}},
}

@article{Brehmer:2024LHC,
  author        = {Brehmer, Johann and Bres{\'o}, V{\'\i}ctor and de Haan, Pim and Plehn, Tilman
                   and Qu, Huilin and Spinner, Jonas and Thaler, Jesse},
  title         = {A {Lorentz}-equivariant transformer for all of the {LHC}},
  journal       = {SciPost Phys.},
  volume        = {19},
  number        = {4},
  pages         = {108},
  year          = {2025},
  doi           = {10.21468/SciPostPhys.19.4.108},
  eprint        = {2411.00446},
  archivePrefix = {arXiv},
  primaryClass  = {hep-ph},
  note          = {\href{https://arxiv.org/abs/2411.00446}{arXiv:2411.00446 [hep-ph]}},
}

@article{maier_photonic_qelm,
  author        = {Maier, Benedikt and Spannowsky, Michael and Williams, Simon},
  title         = {Continuous-variable photonic quantum extreme learning machines for fast collider-data selection},
  journal       = {EPJ Quantum Technol.},
  volume        = {13},
  number        = {1},
  pages         = {70},
  year          = {2026},
  doi           = {10.1140/epjqt/s40507-026-00507-w},
  eprint        = {2510.13994},
  archivePrefix = {arXiv},
  primaryClass  = {quant-ph},
  note          = {\href{https://arxiv.org/abs/2510.13994}{arXiv:2510.13994 [quant-ph]}},
}

@article{blance_photonic_graphs,
  author        = {Blance, Andrew and Spannowsky, Michael},
  title         = {Unsupervised Event Classification with Graphs on Classical and Photonic Quantum Computers},
  journal       = {JHEP},
  volume        = {08},
  pages         = {170},
  year          = {2021},
  doi           = {10.1007/JHEP08(2021)170},
  eprint        = {2103.03897},
  archivePrefix = {arXiv},
  primaryClass  = {hep-ph},
  note          = {\href{https://arxiv.org/abs/2103.03897}{arXiv:2103.03897 [hep-ph]}},
}

@article{incudini_quantum_kernels,
  author        = {Incudini, Massimiliano and Lizzio Bosco, Daniele and Martini, Francesco
                   and Grossi, Michele and Serra, Giuseppe and Di Pierro, Alessandra},
  title         = {Automatic and Effective Discovery of Quantum Kernels},
  journal       = {IEEE Trans. Emerg. Top. Comput. Intell.},
  volume        = {10},
  number        = {3},
  pages         = {2210--2219},
  year          = {2026},
  doi           = {10.1109/TETCI.2024.3499993},
  eprint        = {2209.11144},
  archivePrefix = {arXiv},
  primaryClass  = {quant-ph},
  note          = {\href{https://arxiv.org/abs/2209.11144}{arXiv:2209.11144 [quant-ph]}},
}

@article{belis_latent_quantum_anomaly,
  author        = {Belis, Vasilis and Wo{\'z}niak, Kinga Anna and Puljak, Ema
                   and Barkoutsos, Panagiotis and Dissertori, G{\"u}nther and Grossi, Michele
                   and Pierini, Maurizio and Reiter, Florentin and Tavernelli, Ivano
                   and Vallecorsa, Sofia},
  title         = {Quantum anomaly detection in the latent space of proton collision events at the {LHC}},
  journal       = {Commun. Phys.},
  volume        = {7},
  pages         = {334},
  year          = {2024},
  doi           = {10.1038/s42005-024-01811-6},
  eprint        = {2301.10780},
  archivePrefix = {arXiv},
  primaryClass  = {quant-ph},
  note          = {\href{https://arxiv.org/abs/2301.10780}{arXiv:2301.10780 [quant-ph]}},
}

@article{schreiber_classical_surrogates,
  author        = {Schreiber, Franz J. and Eisert, Jens and Meyer, Johannes Jakob},
  title         = {Classical Surrogates for Quantum Learning Models},
  journal       = {Phys. Rev. Lett.},
  volume        = {131},
  number        = {10},
  pages         = {100803},
  year          = {2023},
  doi           = {10.1103/PhysRevLett.131.100803},
  eprint        = {2206.11740},
  archivePrefix = {arXiv},
  primaryClass  = {quant-ph},
  note          = {\href{https://arxiv.org/abs/2206.11740}{arXiv:2206.11740 [quant-ph]}},
}

@article{felser_quantum_inspired_hep,
  author        = {Felser, Timo and Trenti, Marco and Sestini, Lorenzo and Gianelle, Alessio
                   and Zuliani, Davide and Lucchesi, Donatella and Montangero, Simone},
  title         = {Quantum-inspired machine learning on high-energy physics data},
  journal       = {npj Quantum Inf.},
  volume        = {7},
  pages         = {111},
  year          = {2021},
  doi           = {10.1038/s41534-021-00443-w},
  eprint        = {2004.13747},
  archivePrefix = {arXiv},
  primaryClass  = {stat.ML},
  note          = {\href{https://arxiv.org/abs/2004.13747}{arXiv:2004.13747 [stat.ML]}},
}
\end{document}